\newcommand{\mypar}[1]{\vspace{0.03in}\noindent{\bf #1.}}
\newtheorem{theorem}{Theorem}
\newtheorem{lemma}[theorem]{Lemma}
\newtheorem{definition}[theorem]{Definition}
\newtheorem{assumption}[theorem]{Assumption}
\title{Cooperative Convex Optimization in Networked Systems: Augmented Lagrangian Algorithms with Directed Gossip Communication}
\author{Du$\breve{\mbox{s}}$an Jakoveti\'c, Jo\~ao Xavier, and Jos\'e M.~F.~Moura$^{\star}$
%\thanks{Copyright (c) 2010 IEEE. Personal use of this material is permitted. However, permission to use this material for any other purposes must be obtained from the IEEE by sending a request to pubs-permissions@ieee.org.}
\thanks{The first and second authors are with the Instituto de Sistemas e Rob\'otica~(ISR), Instituto Superior T\'ecnico~(IST), 1049-001 Lisboa, Portugal. The first and third authors are with the Department of Electrical and Computer Engineering,
Carnegie Mellon University, Pittsburgh, PA 15213, USA (e-mail:
[djakovetic,jxavier]@isr.ist.utl.pt, moura@ece.cmu.edu, ph: (412)268-6341, fax: (412)268-3890.) This work is partially supported by: the Carnegie Mellon$|$Portugal Program under a grant from the Funda\c{c}\~ao de Ci$\hat{\mbox{e}}$ncia e Tecnologia~(FCT) from Portugal; by FCT grants SIPM PTDC/EEA-ACR/73749/2006 and SFRH/BD/33520/2008 (through the Carnegie Mellon/Portugal Program
managed by ICTI); by
ISR/IST plurianual funding (POSC program, FEDER); by AFOSR grant~\#~FA95501010291; and by NSF grant~\#~CCF1011903. Du$\breve{\mbox{s}}$an Jakoveti\'c holds a fellowship from~FCT.}}
\begin{document}
\maketitle \thispagestyle{empty} \maketitle
\vspace{-9mm}
\begin{abstract}
We study distributed optimization in networked systems, where nodes cooperate to
find the optimal quantity of common interest, $x=x^\star$. The objective function of the corresponding
 optimization problem is the sum of private (known only by a node,) convex, nodes' objectives and each node imposes a
 private convex constraint on the allowed values of $x$. We solve this problem for generic connected
  network topologies with asymmetric random link failures with a novel distributed, decentralized algorithm. We
   refer to this algorithm as \textbf{AL--G} (augmented Lagrangian gossiping,)
     and to its variants as \textbf{AL--MG} (augmented Lagrangian multi neighbor gossiping)
      and \textbf{AL--BG} (augmented Lagrangian broadcast gossiping.) The \textbf{AL--G} algorithm is based on the augmented Lagrangian dual function. Dual variables are updated by the standard method of
  multipliers, at a slow time scale. To update the primal variables, we propose a novel, Gauss-Seidel type, randomized algorithm, at a fast time scale.
  \textbf{AL--G} uses unidirectional gossip communication, only between immediate neighbors in the network and is resilient to random link failures. For networks with reliable communication (i.e., no failures,) the simplified, \textbf{AL--BG} (augmented Lagrangian broadcast gossiping) algorithm reduces communication, computation and data storage cost. We prove convergence for all proposed algorithms
    and demonstrate by simulations the effectiveness on two applications:
     $l_1$--regularized logistic regression for classification and cooperative spectrum
  sensing for cognitive radio networks.

\end{abstract}
\hspace{.43cm}\textbf{Keywords:}
Distributed algorithm, convex optimization, augmented Lagrangian, gossip communication
%Consensus, weight optimization, correlated link failures, unconstrained optimization, sensor networks, switching topology, broadcast gossip.
\newpage

%%%%%%%%%%%%%%%%%%%%%%%%%%%%%%%%%%%%%%%%%%%%%%%%%%%%%%%%%%%%%%%%%%%%%%%%%
%\title{  Research report   }
%\author{Du$\breve{\mbox{s}}$an Jakoveti\'c}
%%%%%%%%%%%%%%%%%%%%%%%%%%%%%%%%%%%%%%%%%%%%%%%%%%%%%%%%%%%%%%%%%%%%%%%%%
%\begin{document}
\maketitle \thispagestyle{empty} \maketitle
%
%
%We propose a distributed algorithm for solving constrained convex optimization problem across a wireless sensor network (WSN).
%
\vspace{-3mm}
\section{Introduction}
\label{section_intro}
Recently, there has been
increased interest in large scale networked systems including networks of agents, wireless ad-hoc networks, and wireless sensor networks (WSNs.) Typically, these systems lack a central unit, and the inter-node communication is prone to random failures (e.g., random packet dropouts in WSNs.) In this paper, we consider a generic computational model that captures many applications in networked systems.
With this model, nodes cooperate to find the optimal parameter (scalar or vector) of common interest, $x=x^\star$, e.g., the optimal operating point of the network.
 Each node $i$ has a \emph{private} (known only at node $i$) cost function of $x$, e.g., a loss at node $i$ if operating at $x$. The total cost
  is the sum over the individual nodes' costs. Also, each node imposes a \emph{private} constraint on the allowed values of $x$ (e.g., allowed operating points
 at node $i$.) Applications of this computational model include resource allocation in wireless systems~\cite{boyd-johansson}, distributed estimation in wireless sensor networks,~\cite{SoummyaEst}, and distributed, cooperative spectrum sensing in cognitive radio networks,~\cite{bazerque_sensing, bazerque_lasso}.

More formally, nodes cooperatively solve the following optimization problem:
\begin{equation}
\begin{array}[+]{ll}
\mbox{minimize} & \sum_{i=1}^N f_i(x) \\
\mbox{subject to} & x \in \mathcal{X}_i,\,\,i=1,...,N
\end{array}.
\label{eqn_original_primal_problem}
\end{equation}
Here $N$ is the number of nodes in the network, the private cost functions $f_i: {\mathbb R^m} \rightarrow {\mathbb R}$ are convex, and each $f_i(\cdot)$ is known locally only by node $i$. The sets $\mathcal{X}_i$ are \emph{private}, closed, convex  constraint sets. We remark that~\eqref{eqn_original_primal_problem} captures the scenario when, in addition to private constraints, there is a public constraint $x \in X$ (where $X$ is a closed, convex set,) just by replacing $\mathcal{X}_i$ with $\mathcal{X}_i \cap X$.

This paper proposes a novel augmented Lagrangian (AL) primal-dual distributed algorithm for solving~\eqref{eqn_original_primal_problem}, which handles private costs $f_i(\cdot)$, private constraints
$\mathcal{X}_i$, and is resilient to random communication failures. We refer to this algorithm as \textbf{AL--G} (augmented Lagrangian gossiping.) We also consider two variants to \textbf{AL--G}, namely, the \textbf{AL--MG} (augmented Lagrangian multiple neighbor gossiping) and the \textbf{AL--BG} (augmented Lagrangian broadcast gossiping.) The \textbf{AL--G} and \textbf{AL--MG} algorithms use \emph{unidirectional} gossip communication (see, e.g.,~\cite{BoydGossip}). For networks with reliable communication (i.e., no failures,) we propose the simplified \textbf{AL--BG} algorithm with reduced communication, reduced computation, and lower data storage cost. Our algorithms update the dual variables by the standard method of multipliers,~\cite{BertsekasOptimization},
synchronously, at a slow time scale, and update the primal variables with a \emph{novel,} Gauss-Seidel type (see, e.g.,~\cite{BertsekasBook}) randomized algorithm with asynchronous gossip communication, at a fast
time scale. Proof of convergence for the method of multipliers (for the dual variables update)
is available in the literature, e.g.,~\cite{BertsekasOptimization}. However, our algorithms to update
 primal variables (referred to as P--AL--G (primal AL gossip), P--AL--MG and
 P--AL--BG) are novel, a major contribution of this paper is to prove convergence of the
  P--AL--G, for private constraints, under very generic network topologies, random link failures, and gossip communication.
   The proof is then adapted to P--AL--MG and P--AL--BG.

   The \textbf{AL-G} (and its variants \textbf{AL-MG} and \textbf{AL-BG}) algorithms
   are generic tools that fit many applications in networked systems. We provide
    two simulation examples, namely, $l_1$--regularized logistic regression for classification and cooperative spectrum sensing
    for cognitive radio networks. These simulation examples: 1) corroborate convergence of the proposed
    algorithms; and 2) compare their performance, in terms of communication and computational cost,
     with the algorithms in~\cite{asu-random,nedic_novo,bazerque_lasso,bazerque_sensing}.

\mypar{Comparison with existing work} We now identify
 important dimensions of the communication and computation models
 that characterize existing references
 and that help to contrast our paper with the relevant literature.
\emph{Optimization algorithms} to solve~\eqref{eqn_original_primal_problem}, or problems similar to~\eqref{eqn_original_primal_problem}, in a
 distributed way, are usually either primal-dual distributed algorithms or
 primal subgradient~algorithms.
\emph{Optimization constraints} on problem~\eqref{eqn_original_primal_problem} can either be:
no constraints ($\mathcal{X}_i={\mathbb R}^m$); public constraints ($\mathcal{X}_i=\mathcal{X}$);
 and private constraints $\mathcal{X}_i$. \emph{The underlying communication network} can either
 be static (i.e., not varying in time,) or dynamic (i.e., varying in time.)
  A dynamic network can be deterministically or randomly varying. \emph{Link failures}
 can be symmetric or asymmetric;
 that is, the random network realizations can be symmetric or, more generally,
  asymmetric graphs, the latter case being more challenging in general. \emph{The communication protocol} can either be synchronous, or asynchronous, i.e., of gossip~type,~\cite{BoydGossip}. We next review the existing work with respect to these four dimensions.

    \mypar{Primal subgradient algorithms}
    References~\cite{asu-new-jbg,asu-random,nedic_T-AC,nedic_novo}
    and~\cite{nedic-gossip} develop primal subgradient
      algorithms, with~\cite{asu-new-jbg,asu-random,nedic_T-AC,nedic_novo} assuming synchronous communication. References~\cite{nedic_T-AC}~and~\cite{nedic_novo}
       consider a deterministically varying network, with~\cite{nedic_T-AC}
       for the unconstrained
        problem and~\cite{nedic_novo} for public constraints. References~\cite{asu-random} and~\cite{asu-new-jbg}
        consider random networks; reference~\cite{asu-random}
         is for public constraints, while~\cite{asu-new-jbg} assumes private constraints.
           Both references~\cite{asu-new-jbg}~and~\cite{asu-random} essentially handle only \emph{symmetric}
            link failures, namely, they use
           local weighted averaging as an intermediate step
            in the update rule and constrain the corresponding averaging matrix to be \emph{doubly} stochastic.
             In practice, these translate into requiring symmetric graph realizations,
             and, consequently, symmetric link failures. Reference~\cite{nedic-gossip}
             presents a primal subgradient algorithm for unconstrained optimization and static network and uses the gossip communication protocol.
             Finally, reference~\cite{nedic-regression} studies a generalization of problem~\eqref{eqn_original_primal_problem},
             where the objective function $\sum_{i=1}^N f_i(x)$ is replaced by $g \left( \sum_{i=1}^N f_i(x) \right)$;
             that is, an outer public, convex function $g(\cdot)$ is introduced. The optimization problem in~\cite{nedic-regression} has public constraints, the communication is synchronous, and the network is deterministically time varying. Reference~\cite{nedic-regression} proposes a distributed algorithm where each node $i$, at each time step $k$, updates
              two quantities: an estimate of the optimal solution $x_i(k)$, and an estimate of the quantity
              $(1/N)\sum_{j=1}^N f_j(x_i(k))$, by communicating with immediate neighbors only. When the algorithm in~\cite{nedic-regression} is applied to~\eqref{eqn_original_primal_problem}, it reduces to the primal subgradient~in~\cite{nedic_novo}.

  \mypar{Primal-dual algorithms} As far as we are aware, primal-dual algorithms have been studied only for
  \emph{static networks}. For example,
    references~\cite{bazerque_lasso,bazerque_sensing} consider a special case of~\eqref{eqn_original_primal_problem},
     namely, the Lasso (least-absolute shrinkage and selection operator) type problem.
      They propose the AD-MoM (alternating direction method of multipliers) type primal-dual
      algorithms for \emph{static networks}, \emph{synchronous communication}, and \emph{no constraints.}            Reference~\cite{BoydADMoM} applies AD-MoM to various statistical learning problems, including Lasso, support vector machines, and sparse logistic regression, assuming a parallel network architecture
        (all nodes communicate with a fusion node,) synchronous communication, and no~link~failures.

      In this paper,
      rather than subgradient type,
      we provide and develop a AL primal-dual
       algorithm for the
       optimization~\eqref{eqn_original_primal_problem}
        with private costs and private constraints,
        random networks, and asynchronous gossip communication.
         In contrast with existing work on primal-dual methods,
         for example,~\cite{bazerque_lasso,bazerque_sensing}, our \textbf{AL--G} handles \emph{private constraints}, \emph{random networks}, \emph{asymmetric link failures},
           and \emph{gossip communication.}\footnote{\textbf{AL--G} algorithm uses
           asynchronous gossip communication, but it is not completely asynchronous algorithm, as it updates
           the dual variables synchronously, at a slow time scale (as details in Section~{IV}.)}

%Besides the \textbf{AL--G} algorithm, we consider its two variants:
% \textbf{AL--MG} (augmented Lagrangian multi neighbor gossiping) and \textbf{AL--BG} (augmented Lagrangian
%  broadcast gossiping.) We comment on the differences between \textbf{AL--BG} and the algorithms in~\cite{bazerque_lasso,bazerque_sensing}. \textbf{AL--BG},
%  as well as the algorithms in~\cite{bazerque_lasso,bazerque_sensing},
%  minimize the AL dual function. However, they achieve this with two major differences: 1) the algorithms in~\cite{bazerque_lasso,bazerque_sensing} utilize the AD-MoM method, while \textbf{AL--BG} utilizes the classical method of multipliers~\cite{BertsekasBook}; and 2) algorithms in~\cite{bazerque_lasso,bazerque_sensing} utilize synchronous, bidirectional inter-node communication, while \textbf{AL--BG} uses asynchronous, broadcast gossip, unidirectional communication. For the simulation example in {VII}-B, \textbf{AL--BG}
%  converges faster (in terms of
% communication and computation) than the algorithms in~\cite{bazerque_lasso,bazerque_sensing}.
%We detail on the differences between algorithms in~\cite{bazerque_lasso,bazerque_sensing} and
%\textbf{SNG, MNG} and \textbf{AL--BG} in subsection VI-B.
%
%
%

\mypar{Paper organization} Section {II} introduces the communication and computational model. Section {III} presents the \textbf{AL--G} algorithm for the networks with link failures.
Section {IV} proves the convergence of the \textbf{AL--G} algorithm. Section {V} studies the variants to \textbf{AL--G},
the \textbf{AL--MG}, and \textbf{AL--BG} algorithms. Section {VI} provides two simulation examples: 1) $l_1$--regularized logistic regression for classification; and 2) cooperative spectrum sensing for cognitive radios. Finally, section {VII} concludes the paper.
The Appendix proves convergence of \textbf{AL--MG} and \textbf{AL--BG}.

\vspace{-3mm}
\section{Problem model}
\label{section_Problem_Model}
This section explains the communication model (the time slotting, the communication protocol, and the link failures,)
 and the computation model (assumptions underlying the optimization problem~\eqref{eqn_original_primal_problem}.)

\mypar{Network model: Supergraph} The connectivity of the networked system is described by the bidirectional,
   connected supergraph $G = \left( \mathcal{N}, E \right)$, where $\mathcal{N}$ is the set of nodes (with cardinality
 $|\mathcal{N}|=N$) and $E$ is the set of bidirectional edges $\{i,j\}$ ($|E|=M$).
 The supergraph $G$ is simple, i.e., there are no self-edges. Denote by $\Omega_i \subset \mathcal{N}$, the neighborhood set of node $i$ in $G$, with cardinality $d_i=|\Omega_i|$. The integer $d_i$ is the (supergraph) degree of node $i$. The supergraph $G$ models and collects all (possibly unreliable) communication channels in the network; actual network realizations during the algorithm run will be \emph{directed} subgraphs of $G$. We denote the directed edge (arc) that originates in node $i$ and ends in node $j$ either by $(i,j)$ or $i \rightarrow j$, as appropriate.
 The set of all arcs is:
$
E_d = \{ (i,j): \{i,j\} \in E  \},
$ where $|E_d|=2M$. We assume that the supergraph is known, i.e., each node knows a priori
with whom it can communicate (over a possibly unreliable link.)
\vspace{-3mm}

\mypar{Optimization model} We summarize the assumptions on the cost functions $f_i(\cdot)$ and $f(\cdot)$, $f(x):=\sum_{i=1}^N f_i(x)$, and the constraint sets
$\mathcal{X}_i$ in~(1):
\begin{assumption}
We assume the following for the optimization problem~\eqref{eqn_original_primal_problem}:
\begin{enumerate}
\item The functions $f_i:\,
{\mathbb R}^m \rightarrow \mathbb R$ are convex and coercive, i.e., $f_i(x) \rightarrow \infty$ whenever $\|x\| \rightarrow \infty$.
%\item The function $g:\,{\mathbb R}^{mN} \rightarrow \mathbb R$, $g(x_1,...,x_N):=g(\chi)=\sum_{i=1}^N f_i(x_i)$ is %coercive, i.e., $|g(\chi)| \rightarrow +\infty$ when $\|\chi\| \rightarrow +\infty$.
\item The constraint sets $\mathcal{X}_i \subset {\mathbb R}^m$ are closed and convex, and
$\mathcal{X}:=\cap_{i=1}^N \mathcal{X}_i$ is nonempty.
\item \textbf{(Regularity condition)} There exists a point $x_0 \in \mathrm{ri} \left( \mathcal{X}_i\right)$, for all $i=1,...,N$.
\end{enumerate}
\end{assumption}
Here $\mathrm{ri}\left( \mathcal{S}\right)$ denotes the relative
interior of a set $\mathcal{S} \subset {\mathbb R}^m$ (see~\cite{Urruty}).
%
%$\mathrm{ri}\left( \mathcal{S}\right) = \left\{ x:\,\exists r>0,\,\mathrm{s.t.}\,\,
%B_r(x) \cap \mathrm{aff}(\mathcal{S}) \subset \mathcal{S}\right\}$, where $B_r(x)=\left\{y:\,\|y-x\| \leq r \right\}$,
% and $\mathrm{aff}(\mathcal{S})$ is the affine hull
% \[\mathrm{aff}(\mathcal{S}) = \left\{ \sum_{i=1}^k \alpha_i x_i:\,\,
% x_i \in \mathcal{S},\,\,\alpha_i \in {\mathbb R},\,\,\sum_{i=1}^k \alpha_i=1 ,\,k=1,2,...\right\}.\]
%That is, $\mathrm{ri}\left( \mathcal{S}\right)$ is the interior of $\mathcal{S}$
%with respect to the affine hull of $\mathcal{S}$--the smallest affine set containing $\mathcal{S}$,~\cite{Urruty}.
We will derive the \textbf{AL--G} algorithm to solve~\eqref{eqn_original_primal_problem} by first reformulating it (see ahead eqn.~(2),) and then
 dualizing the reformulated problem (using AL dual.) Assumption 1.3 will play a role to assure strong duality. This will be detailed in subsection {III-A}. Note that Assumption 1.3 is rather mild, saying only that
 the intersection of the $\mathcal{X}_i$'s, $i=1,...,N$, is ``large" enough to contain a
 point from the relative interior of each of the $\mathcal{X}_i$'s. Denote by $f^\star$ the optimal value and $\mathcal{X}^\star = \left\{ x^\star \in \mathcal{X}:\, \sum_{i=1}^N f_i(x^\star)=f^\star\right\}$ the solution set to~\eqref{eqn_original_primal_problem}. Under Assumptions~1, $f^\star$ is finite, and $\mathcal{X}^\star$ is nonempty, compact, and convex,~\cite{NedicNotes}.
   The model~\eqref{eqn_original_primal_problem} applies also when $\mathcal{X}_i = {\mathbb R}^m$,
 for $i$'s in a subset of $\{1,...,N\}$. The functions $f_i(\cdot)$, $f(\cdot)$
  need not be differentiable; $f(\cdot)$ satisfies
   an additional mild assumption detailed in~Section~{IV}.

We now reformulate~(1) to derive the \textbf{AL--G} algorithm. Start by cloning the variable $x \in {\mathbb R}^m$ and attaching a local copy of it, $x_i \in {\mathbb R}^m$, to each node in the network. In addition, introduce the variables $y_{ij} \in {\mathbb R}^m $ and $y_{ji} \in {\mathbb R}^m $, attached to each link $\{i,j\}$ in the supergraph. To keep the reformulated problem equivalent to~\eqref{eqn_original_primal_problem}, we introduce coupling constraints $x_i = y_{ij},\,\, (i,j) \in E_d$ and $y_{ij} = y_{ji},\,\, \{i,j\} \in E$. The reformulated optimization problem becomes:
\begin{equation}
\begin{array}[+]{ll}
\mbox{minimize} & \sum_{i=1}^N f_i(x_i) \\
\mbox{subject to} & x_i \in \mathcal{X}_i,\,\,i=1,...,N,
\\ & x_i = y_{ij},\,\, (i,j) \in E_d \\
& y_{ij} = y_{ji},\,\, \{i,j\} \in E.
\end{array}
\label{eqn_primal_reformulated_with_y_s}
\end{equation}
\begin{figure}[thpb]
      \centering
      \includegraphics[trim = 10mm 85.mm 10mm 88.mm, clip, width=10cm]{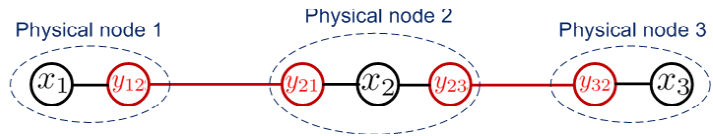}
      \caption{ Illustration of the reformulation~\eqref{eqn_primal_reformulated_with_y_s} for a chain supergraph with $N=3$ (physical) nodes. }
      \label{figure-proof}
\end{figure}
The variables $x_i$ and $y_{ij}$ may be interpreted as virtual nodes in the network (see Figure~1.) Physically,
the variables $x_i$, $y_{ij}$, $j \in \Omega_i$ are maintained by (physical) node $i$. The virtual link between
 nodes $x_i$ and $y_{ij}$ is reliable (non-failing,) as both $x_i$ and $y_{ij}$ are physically maintained by node $i$. On the other
 hand, the virtual link between $y_{ij}$ and $y_{ji}$ may be unreliable (failing,) as this link corresponds
 to the physical link between nodes $i$ and $j$.

The optimization problems~\eqref{eqn_original_primal_problem} and~\eqref{eqn_primal_reformulated_with_y_s}
 are equivalent because the supergraph is connected. The optimal value for~\eqref{eqn_primal_reformulated_with_y_s} is
 equal to the optimal value for~\eqref{eqn_original_primal_problem} and equals $f^\star$; the set of solutions to~\eqref{eqn_primal_reformulated_with_y_s}
  is $\left\{ \{x_i^\star\}, \{y_{ij}^\star\}:\,x_i^\star=x^\star, \forall i=1,...,N,\,y_{ij}^\star=x^\star,\,\forall (i,j) \in E_d,\,
  \mathrm{for\,\,some\,}x^\star \in \mathcal{X}^\star \right\}$.

%\mypar{Remark} Abusing terminology, we refer to $x_i$'s and $y_{ij}$'s as the \mathrm{``nodes"} ($x_i$'s$\--$real and $y_{ij}$'s $\--$virtual nodes,) although they only denote the variables in the optimization problem. In actual (physical) implementation, the variables $y_{ij}$, $j \in \Omega_i$, will be maintained by the (real) node $i$ in the network.
%\vspace{-3mm}
%\subsection{Time slotting, gossip communication model, and random link failures}
%\label{subsection_time-model}
%
%
%
%

\mypar{Time slotting} As we will see in section~{III}, the \textbf{AL--G} algorithm (and also its
 variants \textbf{AL--MG} and \textbf{AL--BG} in section V) is based on the
AL dual of~\eqref{eqn_primal_reformulated_with_y_s}. The \textbf{AL--G} operates at 2 time scales:
 the dual variables are updated at a slow time scale, and the primal variables are updated at a fast time scale. Thus,
accordingly, the time is slotted with: 1) slow time scale slots $\{t\}$; and 2) fast time scale slots $\{k\}$. Fast time scale slots (for the primal variables update) involve asynchronous communication between the nodes in the network and
are detailed in the next paragraph. At the end of each $t$-slot, there is an idle time interval with no
communication, when the dual variables are updated. The dual variables update at each node requires no communication.

\mypar{Fast time scale slots $\{k\}$ and asynchronous communication model}
%\label{subsubsec_fast-time-scale}
We now define the fast time scale slots $\{k\}$ for the asynchronous communication
 and the primal variables update. We assume the standard model for asynchronous communication~\cite{BoydGossip,Scaglione}.
Each node (both physical and virtual) has a clock that ticks (independently across nodes)
 according to a $\lambda$-rate Poisson process. Denote the clocks
 of $x_i$ and $y_{ij}$ by $T^x_i$ and $T^y_{ij}$, respectively. If $T^x_i$ ticks,
 a virtual communication from $y_{ij}$, $\forall j \in \Omega_i$, to $x_i$, follows. With the \textbf{AL--G}
  algorithm, this will physically correspond to the update of the variable $x_i$, as we will see later. If the clock $T^y_{ij}$
   ticks, then (virtual) node $y_{ij}$ transmits to $y_{ji}$ (physically, node $i$ transmits to node $j$.)
    We will see later that, after a (successful) communication $y_{ij}\rightarrow y_{ji}$,
    the update of $y_{ji}$ follows. We also introduce a virtual clock $T$ that ticks whenever one of the clocks $T^x_i$,
 $T_{ij}^y$, ticks; the clock $T$ ticks according to a $(N+2M)$--rate Poisson process. Denote by $\tau_k$, $k=1,2,...$ the times when the $k$-th tick of $T$ occurs. The time is slotted
 and the $k$-th slot is $[\tau_{k-1}, \tau_{k})$, $\tau_0=0$, $k=1,2,...$\footnote{For notation simplicity,
  at the beginning of each $t$--slot, we reset $\tau_0$ to zero, and we start counting the $k$--slots
   from $k=1$.}
\vspace{-3mm}

 \mypar{Random link failures}
  %\label{subsection_random_link_failures}
  Motivated by applications in wireless networked systems, we allow that transmissions $y_{ij} \rightarrow y_{ji}$ may fail. (Of course, the transmissions through the virtual links
  $y_{ij}\rightarrow x_i$ do not fail.) To formally account for
   link failures, we define the $N\times N$ random adjacency
    matrices $A(k)$, $k=1,2,...$; the matrix $A(k)$
     defines the set of available physical links at time slot $k$.
     We assume that the link failures are
      temporally independent, i.e., $\{A(k)\}$ are independent identically distributed (i.i.d.)
      The entries $A_{ij}(k)$, $(i,j) \in E_d$, are Bernoulli random variables, $A_{ij}(k) \sim
      \mathrm{Bernoulli}(\pi_{ij})$, $\pi_{ij}=\mathrm{Prob}\left( A_{ij}(k)=1\right)>0$, and $A_{ij}(k) \equiv 0$, for $(i,j) \notin E_d$. We allow $A_{ij}(k)$ and $A_{lm}(k)$ to be correlated.\footnote{With \textbf{AL--MG} algorithm, in Section VI, we will additionally require $A_{ij}(k)$ and $A_{lm}(k)$ be independent.}
       At time slot $k$,
        at most one link $(i,j)\in E_d$ is activated for transmission. If it is available at time $k$,
         i.e., if $A_{ij}(k)=1$, then the transmission
          is successful; if the link $(i,j)$ is unavailable ($A_{ij}(k)=0$,) then the transmission is unsuccessful. We assume naturally that the Poisson process
          that governs the ticks of $T$ and the adjacency matrices $A(k)$, $k=1,2,...$ are independent. Introduce the ordering of
           links $(i,j) \in E_d$, by attaching a distinct number $l$, $l=1,...,2M$, to each link $(i,j)$; symbolically, we write
            this as $l \sim (i,j)$. Introduce now the random variables $\zeta(k)$, $k=1,2,...$, defined as follows:
1) $\zeta(k)=i$, if the $k$-th tick of $T$ comes from $T_i^x$; 2) $\zeta(k)=N+l$, $l \sim (i,j)$, if the $k$-th tick of $T$ comes from $T_{ij}^y$ and $A_{ij}(k)=1$;
and 3) $\zeta(k)=0$, otherwise. It can be shown that $\zeta(k)$, $k=1,2,...$, are i.i.d. The random variables $\zeta(k)$
 define the order of events in our communication model. For
 example, $\zeta\eqref{eqn_original_primal_problem}=N+l$, $l \sim(i,j)$, means that, at time slot $k=$1, the virtual node $y_{ij}$ successfully
  transmitted data to the virtual node $y_{ji}$. We
  remark that $\mathrm{Prob} \left( \zeta(k)=s \right)$
   is strictly positive, $\forall s=0,1,...,N+2M$. This fact will be important when studying the convergence~of~\textbf{AL--G}.

          The communication model in this paper, with \emph{static} supergraph and link failures,
          is standard for networked systems supported by wireless communication and static (non moving) nodes, see, e.g., \cite{SoummyaTopologyDesign,BroadcastGossipFailures}. The model needs to be modified for
           scenarios with moving nodes (e.g., mobile robots) where the supergraph itself can be time varying. This
           is not considered here.

\vspace{-3mm}
\section{\textbf{AL--G} algorithm (augmented Lagrangian gossiping)}
\vspace{-3mm}

%\subsection{\vspace{-2mm}Dualization and solving the dual}
%
This section details the \textbf{AL--G} algorithm for solving \eqref{eqn_original_primal_problem}.
In subsection~\ref{subsect_dualization}, we dualize \eqref{eqn_primal_reformulated_with_y_s} to form the AL dual of problem~\eqref{eqn_primal_reformulated_with_y_s}. Subsection IV-B details
the D--AL--G algorithm for the dual variable update, at a slow time scale; subsection IV-C
details P--AL--G to update the primal variables, at a fast~time~scale.
\vspace{-3mm}
\subsection{Dualization}
\vspace{-3mm}
\label{subsect_dualization}
We form the AL dual of the optimization problem~\eqref{eqn_primal_reformulated_with_y_s} by dualizing all the constraints of the type $x_i = y_{ij}$ and $y_{ij}=y_{ji}$. The dual variable that corresponds to the constraint $x_i=y_{ij}$ will be denoted by $\mu_{(i,j)}$, the dual variable that corresponds to the (different) constraint $x_j=y_{ji}$ will be denoted by $\mu_{(j,i)}$, and the one that corresponds to $y_{ij}=y_{ji}$ is denoted by $\lambda_{\{i,j\}}$. In the algorithm implementation, both nodes $i$ and $j$ will maintain their own copy of
the variable $\lambda_{\{i,j\}}$--the variable $\lambda_{(i,j)}$ at node $i$ and the variable $\lambda_{(j,i)}$ at node $j$.
Formally, we use both $\lambda_{(i,j)}$ and $\lambda_{(j,i)}$, and we add the constraint
 $\lambda_{(i,j)}=\lambda_{(j,i)}$. The term after dualizing $y_{ij}=y_{ji}$, equal to $\lambda_{\{i,j\}}^\top (y_{ij}-
  y_{ji})$, becomes: $\lambda_{(i,j)}^\top y_{ij}-
 \lambda_{(j,i)}^\top y_{ji}$. The resulting AL dual function $L_a(\cdot)$, the (augmented) Lagrangian $L(\cdot)$, and the AL dual optimization problem are, respectively,
  given in eqns.~\eqref{eqn-dual-fcn},~\eqref{eqn-lagrangian},~and~\eqref{eqn-dual-problem-al}.
\begin{eqnarray}
\label{eqn-dual-fcn}
\begin{array}[+]{ll}
L_a \left( \, \{ \lambda_{(i,j)} \}, \{ \mu_{(i,j)} \}  \, \right)  \,= \,
\,\,\, \mbox{min} & L \left(  \{x_i\},\, \{y_{ij}\},\, \{  \lambda_{(i,j)} \}, \{\mu_{(i,j)}\}  \right) \,\,\,\,\,\,\,\,\,\,\,\\
\,\,\,\,\,\,\,\,\,\,\,\,\,\,\,\,\,\,\,\,\,\,\,\,\,\,\,\,\,\,\,\,\,\,\,\,\,\,\,
\,
\,\,\,\,\,\,\,\,\,\,\,\,\,\,\,\,\,\,\,\,\,\,\,\,\,\,\,\,
\mbox{subject to} & x_i \in \mathcal{X}_i,\, i=1,...,N   \,\,\,\,\,\,\,\,\,\,\,\\
                  & y_{ij} \in {\mathbb R}^m,\,(i,j) \in E_d
\end{array}
\end{eqnarray}
\begin{eqnarray}
\label{eqn-lagrangian}
L \left(  \{x_i\},\{ y_{ij} \}, \, \{\lambda_{(i,j)}\}, \{\mu_{(i,j)} \}  \right) \,= \,
 \sum_{i=1}^N f_i(x_i) + \sum_{(i,j) \in E_d} \mu_{(i,j)}^{\top}\, \left( x_i - y_{ij} \right) \,\,\,\,\,\,\,\,\,\,\,\,\,\,\,\, \,\,\, \\
 +   \sum_{\{i,j\} \in E, \,i<j}  \lambda_{(i,j)}^{\top}\,  y_{ij} -  \lambda_{(j,i)}^{\top} \, y_{ji}
 + \frac{1}{2} \rho \sum_{(i,j) \in E_d}\,\| x_i - y_{ij} \|^2 +
  \frac{1}{2} \rho \sum_{\{i,j\} \in E,\,i<j}\,\| y_{ij} - y_{ji} \|^2  \nonumber
\end{eqnarray}
\begin{equation}
\label{eqn-dual-problem-al}
\begin{array}[+]{ll}
\mbox{maximize} & L_a \left(  \{\lambda_{(i,j)}\}  , \, \{ \mu_{(i,j)} \} \right) \\
\mbox{subject to} & \lambda_{(i,j)}=  \lambda_{(j,i)},\,\,\{i,j\} \in E\\
& \mu_{(i,j)} \in {\mathbb R}^m,\,\,(i,j) \in E_d
\end{array}.
\end{equation}
In eqn.~\eqref{eqn-lagrangian}, $\rho$ is a positive parameter. See~\cite{BertsekasOptimization} for some background
on AL methods. The terms $ \lambda_{(i,j)}^{\top}\,  y_{ij} -  \lambda_{(j,i)}^{\top} \, y_{ji} $ in the sum $\sum_{\{i,j\} \in E} {  \lambda_{(i,j)}^{\top}\,  y_{ij} -  \lambda_{(j,i)}^{\top} \, y_{ji}    }$ are arranged such that $i<j$, for all
$\{i,j\} \in E$.
\footnote{For each link $\{i,j\} \in E$, the virtual nodes $y_{ij}$ and $y_{ji}$ (i.e., nodes $i$ and $j$,) have to agree
 beforehand (in the network training period) which one takes the $+$ sign and which one takes the $-$ sign in $ \lambda_{(i,j)}^{\top}\,  y_{ij} -  \lambda_{(j,i)}^{\top} \, y_{ji} $. In eqn.~\eqref{eqn-lagrangian}, for sake of notation simplicity, the distribution of $+$ and $-$ signs at each link $\{i,j\}$ is realized by the order of node numbers, where a distinct number in $\{1,...,N\}$ is assigned to each node. However, what matters is only to assign
 $+$ to one node (say $i$) and $-$ to the other, for each $\{i,j\} \in E$.}

Denote by $d^\star$ the optimal value of the dual problem~\eqref{eqn-dual-problem-al}, the dual of~\eqref{eqn_primal_reformulated_with_y_s}. Under Assumption~1,
the strong duality between~\eqref{eqn_primal_reformulated_with_y_s} and~\eqref{eqn-dual-problem-al} holds, and $d^\star=f^\star$; moreover, the set
 of optimal solutions $\mathcal{D}^\star = \left\{ \{\lambda^\star_{(i,j)}\}  , \, \{ \mu^\star_{(i,j)}\}:
 \, L_a \left(  \{\lambda^\star_{(i,j)}\}  , \, \{ \mu^\star_{(i,j)} \} \right)=f^\star \right\}$ is nonempty.
  Denote by $C:=\mathcal{X}_1 \times \mathcal{X}_2 \times ... \times \mathcal{X}_N \times \left({\mathbb R}\right)^{m(2M)}$
   the constraint set in~\eqref{eqn-dual-fcn}, i.e., the constraints in~\eqref{eqn_primal_reformulated_with_y_s} that are not dualized. Let $x_0$ be a point in $\mathrm{ri}(\mathcal{X}_i)$, $i=1,...,N$ (see Assumption 1.3.) Then, a point $\left( \{x_{i,0}\}, \{y_{ij,0}\}\right) \in C$, where $x_{i,0}=x_0$, $y_{ij,0}=x_0$, belongs to $\mathrm{ri}(C)$,
    and it clearly satisfies all equality constraints in the primal problem~\eqref{eqn_primal_reformulated_with_y_s}; hence, it is a Slater point, and the
    above claims on strong duality hold, \cite{Urruty}. We remark that
     strong duality holds for any choice of $\rho \geq 0$ (but we are interested
     only in the case $\rho>0$,) and, moreover, the set of dual solutions $\mathcal{D}^\star$
     does not depend on the choice of $\rho$, provided that $\rho \geq 0$ (see, e.g., \cite{Rockafellar}, p.359.)
%
%
%
%\subsection{Solving the dual}
%
%
\vspace{-3mm}
\subsection{Solving the dual: D--AL--G (dual augmented Lagrangian gossiping) algorithm}
We now explain how to solve the dual problem~\eqref{eqn-dual-problem-al}. First, we note that~\eqref{eqn-dual-problem-al} is equivalent
to the unconstrained maximization of $L_a^{\prime} \left( \{\lambda_{ \{i,j\}}\}, \{\mu_{(i,j)}\} \right)
=\min_{x_i \in \mathcal{X}_i, y_{ij}\in {\mathbb R}^m} L^\prime
\left(
  \{x_i\},\{ y_{ij} \}, \, \{\lambda_{\{i,j\}}\}, \{\mu_{(i,j)} \}  \right)
$, where the function $L^\prime
\left(
 \{x_i\},\{ y_{ij} \}, \, \{\lambda_{\{i,j\}}\}, \{\mu_{(i,j)} \}  \right)$
 is defined by replacing both $\lambda_{(i,j)}$ and $\lambda_{(j,i)}$
  in $L(\cdot)$ (eqn.~\eqref{eqn-lagrangian}) with $\lambda_{\{i,j\}}$, for all $\{i,j\} \in E$.
 The standard method of multipliers for the unconstrained maximization of $L_a^{\prime}(\cdot)$
  is given by:
\begin{eqnarray}
\label{eqn_dual_update_rule_lambda_prel}
\lambda_{\{i,j\}}(t+1) &=& \lambda_{\{i,j\}}(t) + \rho \,\, \mathrm{sign}(j-i)\, \left(  y_{ij}^{\star}(t) - y_{ji}^{\star}(t)  \right)
,\,\{i,j\}\in E     \\
\mu_{(i,j)}(t+1) &=& \mu_{(i,j)}(t) + \rho \, \left(  x_i^{\star}(t) - y_{ij}^{\star}(t)  \right), (i,j)\in E_d. \nonumber
\end{eqnarray}
\vspace{-2mm}
%where
\begin{eqnarray}
\label{eqn_solve_for_x_i_y_ij_prel}
\begin{array}[+]{ll}
\left( \, \{x_i^{\star}(t)\} ,\, \{y_{ij}^{\star}(t)\}  \, \right)\, \in \,
\mbox{arg min} & L^{\prime} \left(  \{x_i\},\,\{y_{ij}\},\, \{\lambda_{\{i,j\}}(t)\},\,\{\mu_{(i,j)}(t)\}  \right) \\
\,\,\,\,\,\,\,\,\,\,\,\,\,\,\,\,\,\,\,\,\,\,\,\,\,\,\,\,\,\,\,\,\,\,\,\,\,\,\,
\,\,\,\,\,\,\,\,\,\,\,\,\,\,\,\,\,\,\,\,
\mbox{subject to} & x_i \in \mathcal{X}_i,\, i=1,...,N \\
                  & y_{ij} \in {\mathbb R}^m,\,\, (i,j) \in E_d.
\end{array}
\end{eqnarray}
Assigning a copy of $\lambda_{\{i,j\}}$ to both nodes $i$ (the corresponding copy is $\lambda_{(i,j)}$) and $j$ (the
corresponding copy is $\lambda_{(j,i)}$), eqn.~\eqref{eqn_dual_update_rule_lambda_prel} immediately yields an algorithm to solve \eqref{eqn-dual-problem-al}, given by:
%
%
%
%%
%We solve the dual problem~\eqref{eqn-dual-problem-al} by the standard method of multipliers~\cite{BertsekasOptimization}:
\begin{eqnarray}
\label{eqn_dual_update_rule_lambda_mu}
\lambda_{(i,j)}(t+1) &=& \lambda_{(i,j)}(t) + \rho \,\, \mathrm{sign}(j-i)\, \left(  y_{ij}^{\star}(t) - y_{ji}^{\star}(t)  \right)     , \, (i,j) \in E_d\\
\mu_{(i,j)}(t+1) &=& \mu_{(i,j)}(t) + \rho \, \left(  x_i^{\star}(t) - y_{ij}^{\star}(t)  \right), \, (i,j) \in E_d, \nonumber
\end{eqnarray}
where
\begin{eqnarray}
\label{eqn_solve_for_x_i_y_ij}
\begin{array}[+]{ll}
\left( \, \{x_i^{\star}(t)\} ,\, \{y_{ij}^{\star}(t)\}  \, \right)\, \in \,
\mbox{arg min} & L \left(  \{x_i\},\,\{y_{ij}\},\, \{\lambda_{(i,j)}(t)\},\,\{\mu_{(i,j)}(t)\}  \right) \\
\,\,\,\,\,\,\,\,\,\,\,\,\,\,\,\,\,\,\,\,\,\,\,\,\,\,\,\,\,\,\,\,\,\,\,\,\,\,\,
\,\,\,\,\,\,\,\,\,\,\,\,\,\,\,\,\,\,\,\,
\mbox{subject to} & x_i \in \mathcal{X}_i,\, i=1,...,N \\
                  & y_{ij} \in {\mathbb R}^m,\,\, (i,j) \in E_d.
\end{array}
\end{eqnarray}
(Note that $\left( \, \{x_i^{\star}(t)\} ,\, \{y_{ij}^{\star}(t)\}  \, \right)$ is the same in \eqref{eqn_solve_for_x_i_y_ij_prel} and \eqref{eqn_solve_for_x_i_y_ij}.) According to eqn.~\eqref{eqn_dual_update_rule_lambda_mu}, essentially,
 both nodes $i$ and $j$ maintain their own copy ($\lambda_{(i,j)}$ and $\lambda_{(j,i)}$, respectively) of the same variable, $\lambda_{\{i,j\}}$.
%
%
%
%
%Solving~\eqref{eqn-dual-problem-al}, the dual of~\eqref{eqn_primal_reformulated_with_y_s}, automatically provides a solution of~\eqref{eqn_primal_reformulated_with_y_s}, and hence, a solution of~\eqref{eqn_original_primal_problem}.
It can be shown that, under Assumption~1, any limit point of the sequence $\left( \{x_i^\star(t)\}, \{y_{ij}^\star(t)\}\right)$, $t=0,1,...$, is a solution of~\eqref{eqn_primal_reformulated_with_y_s} (see, e.g.,~\cite{BertsekasBook}, Section 3.4);
 and the corresponding limit point of the sequence $x_i^\star(t)$, $t=0,1,...$, is a solution of~\eqref{eqn_original_primal_problem}.
   %Thus, the method of multipliers in~(6)~and~\eqref{eqn_solve_for_x_i_y_ij} produces a solution $x_i^{\mathrm{opt}}$
%    of~\eqref{eqn_original_primal_problem} at all nodes $i$.

Before updating the dual variables as in~\eqref{eqn_dual_update_rule_lambda_mu}, the nodes need to solve problem~\eqref{eqn_solve_for_x_i_y_ij},
with fixed dual variables, to get
$\left( \{x_i^\star(t)\}, \{y_{ij}^\star(t)\}\right)$. We will explain in the next subsection~(IV-C),
how the P--AL--G algorithm solves problem~\eqref{eqn_solve_for_x_i_y_ij}
in a distributed, iterative way, at a fast time scale $\{k\}$. We remark that P--AL--G
 terminates after a finite number of iterations $k$, and thus produces an inexact solution
  of~\eqref{eqn_solve_for_x_i_y_ij}. We will see that, after termination of the P--AL--G algorithm, an inexact solution for
  $y_{ji}$ is available at node $i$; denote it by $y_{ji}^L(t)$. Denote, respectively,
  by $x_i^F(t)$
   and $y_{ij}^F(t)$,
  the inexact solutions for $x_i$ and $y_{ij}$ at node $i$, after termination of P--AL--G. Then, the implementable update of the dual variables~is:
\begin{eqnarray}
\label{eqn_dual_INEXACT_update_rule_lambda_mu}
\lambda_{(i,j)}(t+1) &=& \lambda_{(i,j)}(t) + \rho \,\, \mathrm{sign}(j-i)\, \left(  y_{ij}^{F}(t) - y_{ji}^{L}(t)  \right)     \\
\mu_{(i,j)}(t+1) &=& \mu_{(i,j)}(t) + \rho \, \left(  x_i^{F}(t) - y_{ij}^{F}(t)  \right). \nonumber
\end{eqnarray}
Note that the ``inexact'' algorithm in~\eqref{eqn_dual_INEXACT_update_rule_lambda_mu} differs
from \eqref{eqn_dual_update_rule_lambda_mu} in that it does not guarantee that $\lambda_{(i,j)}(t)=\lambda_{(j,i)}(t)$, due to a finite time termination
 of P--AL--G.
\vspace{-3mm}
\subsection{Solving for~\eqref{eqn_solve_for_x_i_y_ij}: P--AL--G algorithm}
Given $\{\lambda_{(i,j)}(t)\},\,\{\mu_{(i,j)}(t)\}$, we solve the primal problem~\eqref{eqn_solve_for_x_i_y_ij} by a randomized, block-coordinate, iterative algorithm, that we refer to as P--AL--G. To simplify notation, we will write only $ \lambda_{ (i,j)} $
and $\mu_{(i,j)}$ instead of $\lambda_{(i,j)}(t)$, $\mu_{(i,j)}(t)$. We remark that $\lambda_{(i,j)}(t)$, $\mu_{(i,j)}(t)$ stay fixed while the optimization in eqn.~\eqref{eqn_solve_for_x_i_y_ij} is done (with respect to $x_i$, $y_{ij}$.)

%With \textbf{AL--G} algorithm, nodes will effectively use certain sums of neighbor's primal and dual variables. We define these sums for future reference:
%$\overline{y}_i = \sum_{j \in \Omega_i} y_{ij}$; $ \tilde{y}_i  = \sum_{j \in \Omega_i} y_{ji} $;
% $
%\overline{y}^{\star}_i(t) = \sum_{j \in \Omega_i} y^{\star}_{ij}(t)$; $
%\tilde{y}^{\star}_i(t) = \sum_{j \in \Omega_i} y^{\star}_{ji}(t) $; $
%\overline{\mu}_i    = \sum_{j \in \Omega_i} \mu_{(i,j)} $; $
%\overline{\lambda}_i =  \sum_{j \in \Omega_i} \mathrm{sign}(j-i)\,\lambda_{\{i,j\}}$.
%
%
%
The block-coordinate iterative algorithm works as follows: at time slot $k$, the function in~\eqref{eqn-lagrangian} is optimized with respect to a single block-coordinate, either $x_i$ or $y_{ij}$, while other blocks are fixed. Such an algorithm for solving~\eqref{eqn_solve_for_x_i_y_ij} admits distributed implementation, as we show next. Minimization of the function $L \left(  \{x_i\},\{ y_{ij} \}, \, \{\lambda_{(i,j)}\}, \{\mu_{(i,j)} \}  \right)$ with respect to $x_i$, while the other coordinates $x_j$ and $y_{ij}$ are fixed, is equivalent to the following problem:
%\begin{equation}
% \begin{array}[+]{ll}
%\mbox{minimize} & f_i(x_i) + \sum_{j \in \Omega_i} \mu_{(i,j)}^{\top}\, \left( x_i - y_{ij} \right)
% + \frac{1}{2} \rho \sum_{j \in \Omega_i} \,\| x_i - y_{ij} \|^2 \\
%\mbox{subject to} & x_i \in \mathcal{X}_i.
%\end{array}
%\label{eqn_solving_wrt_x_i}
% \end{equation}
%Thus, in order to update $x_i$, node $i$ needs only the information about its virtual-neighbor nodes $y_{ij}$, $j \in \Omega_i$.
%%
%It is easy to see that~\eqref{eqn_solve_for_x_i_y_ij} is equivalent to:
\begin{equation}
 \begin{array}[+]{ll}
\mbox{minimize} & f_i(x_i) + \left(    \overline{\mu}_i - \rho\, \overline{y}_{i}     \right)^{\top} x_i
 + \frac{1}{2} \rho \,d_i\|x_i\|^2\\
\mbox{subject to} & x_i \in \mathcal{X}_i
\end{array},
\label{eqn_solving_wrt_x_i-simplified}
 \end{equation}
where $\overline{\mu}_i=\sum_{j \in \Omega_i} \mu_{(i,j)}$ and $\overline{y}_i = \sum_{j \in \Omega_i}y_{ij}$. Thus, in order to update $x_i$, the node $i$
 needs only information from its (virtual) neighbors. Minimization of the function
 $L \left(  \{x_i\},\{ y_{ij} \}, \,
  \{ \lambda_{(i,j)}\}, \{\mu_{(i,j)} \}  \right)$ with respect to $y_{ij}$, while the other coordinates $x_j$ and $y_{lm}$ are fixed, is equivalent to:
\begin{equation}
 \begin{array}[+]{ll}
\mbox{minimize} &  \mu_{(i,j)}^{\top}\, \left( x_i - y_{ij} \right) + \lambda_{(i,j)}^{\top}
\mathrm{sign}(j-i)
\, \left( y_{ij} - y_{ji} \right)
 + \frac{1}{2} \rho \| x_i - y_{ij} \|^2  +   \frac{1}{2} \rho \| y_{ij} - y_{ji} \|^2\\
\mbox{subject to} & y_{ij} \in {\mathbb R}^m.
\end{array}
\label{solve_wrt_y_ij}
 \end{equation}
Thus, in order to update $y_{ij}$, the corresponding virtual node needs only to communicate information with its neighbors in the network, $y_{ji}$ and $x_i$. Physical communication is required only with $y_{ji}$ (i.e., with physical node $j$.) The optimization problem~\eqref{solve_wrt_y_ij} is an unconstrained problem with convex quadratic cost and admits the closed form solution:
\begin{equation}
\label{eqn_solution_y_ij}
y_{ij} = \frac{1}{2}y_{ji}+\frac{1}{2}x_i+\frac{1}{2 \rho} \left( \mu_{(i,j)} - \mathrm{sign}(j-i)\,\lambda_{(i,j)}  \right) .
\end{equation}
%
%
%
%Recall the definitions in eqn.~\eqref{eqn_overline_y}.
%%
%\begin{equation}
%\label{eqn_solution_overline_y}
%\overline{y}_{i} = \frac{1}{2} \tilde{y}_{i} + \frac{1}{2} d_i x_i + \frac{1}{2 \rho} \left( \overline{\mu}_{i} - \overline{\lambda}_{i}  \right) .
%\end{equation}
%
%
%
%
%
%
\mypar{Distributed implementation}
%We adopt standard model for asynchronous communication, as with the gossip averaging algorithms,~\cite{BoydGossip,Scaglione}. Assume that each (physical) node $i$ has two independent Poisson clocks, $T_i^{x}$ and $T_i^{y}$. If the clock $T_i^{y}$ ticks, the node $i$ selects one of the neighbors $j \in \Omega_i$ randomly, and transmits to it the current value of $y_{ij}$. If node $j$ successfully receives $y_{ij}$, it updates the variable $y_{ji}$. In case that the clock $T_i^x$ ticks, then the node $i$ updates the variable $x_i$. The algorithm is summarized below.
We have seen that the block-coordinate updates in eqns.~\eqref{eqn_solving_wrt_x_i-simplified}~and~\eqref{eqn_solution_y_ij} only
 require neighborhood information at each node. We next give the distributed implementation
  of P--AL--G (see Algorithm~1) using the asynchronous communication model defined in section~{II}.% The algorithm is summarized in Algorithm~1.
\vspace{-5mm}
\begin{algorithm}
\caption{  Algorithm with gossiping for solving~\eqref{eqn_solve_for_x_i_y_ij} (P--AL--G)  }
\begin{algorithmic}[1]
    \REPEAT
    %\STATE
        \STATE Wait for the tick of one of the clocks $T_j^x$, $T_{ij}^y$.
    %\STATE
        \STATE If clock $T_{ij}^y$ ticks, node $i$ transmits to node $j$ the current value of $y_{ij}$.
        \\ If node $j$ successfully receives $y_{ij}$, it updates the variable $y_{ji}$ according to the equation~\eqref{eqn_solution_y_ij}.
          \STATE If clock $T_i^x$ ticks, node $i$ updates the variable $x_i$ by solving~\eqref{eqn_solving_wrt_x_i-simplified}.
    \UNTIL a stopping criterion is met.
\end{algorithmic}
%\label{algorithm_SNG}
\end{algorithm}
\vspace{-2mm}

%\mypar{Remark} It is easy to see that P--AL--G works in the presence of random link failures, since, if the transmission
% through link $i \rightarrow j$ is unsuccessful, node $j$ is not triggered to update $y_{ji}$, and thus, no update
% of the variables occurs.

\subsubsection{Simplified notation and an abstract model of the P--AL--G}
\label{subsubsec-abst-model-of-p-sng}
We now simplify the notation and introduce an abstract model for the P--AL--G algorithm, for
 the purpose of convergence analysis in Section~{IV}.
  Denote, in a unified way, by $z_i$, the primal variables $x_i$ and $y_{ij}$,
   i.e., $z_i:=x_i$, $i=1,...,N$, and $z_l:=y_{ij}$, $l=N+1,...,N+2M$, $(i,j) \in E_d$.
   Then, we can write the function in~\eqref{eqn-lagrangian}, viewed as a function of
   the primal variables, simply as $L(z)$, $L:{\mathbb R}^{m(N+2M)}\rightarrow \mathbb R$.
   Also, denote in a unified way the
   constraint sets $C_i:=\mathcal{X}_i$, $i=1,...,N$,
    and $C_l:={\mathbb R}^m$, $l=N+1,...,2M+N$ ($C_l$, $l=N+1,...,N+2M$;
     these sets correspond to the constraints
     on $y_{ij}$, $(i,j)\in E_d$.) Finally, define $C:=C_1 \times C_2\times ...\times C_{N+2M}$.
     Thus, the optimizations in~\eqref{eqn_solving_wrt_x_i-simplified}~and~\eqref{eqn_solution_y_ij} are simply minimizations of $L(z)$
     with respect to a single (block) coordinate $z_l$, $l=1,...,2M+N$. Recall the
    definition of $\zeta(k)$, $k=1,2,...$ in section~{II}.
    Further,
     denote $P_i:=\mathrm{Prob}\left( \zeta(k)=i\right)>0$, $i=0,1,2,...,2M+N$.
     Then, it is easy to
     see that the P--AL--G algorithm can be formulated as in Algorithm~2.
\vspace{-1mm}
Finally, we summarize the overall primal-dual \textbf{AL--G}
 algorithm in Algorithm~3.
 \vspace{-5mm}
\begin{algorithm}
\caption{  \textbf{AL--G} algorithm at node $i$   }
\begin{algorithmic}[1]
    \STATE  Set $t=0$, $\lambda_{(i,j)} (t=0)=0$, ${\mu}_{(i,j)}(t=0)=0$, $j \in \Omega_i$ \\
    \REPEAT
    \STATE Run P--AL--G (cooperatively with the rest of the network) to get $x_i^{F}(t)$, $y_{ij}^{F}(t)$ and $y_{ji}^{L}(t)$, $j \in \Omega_i$
    \STATE
     Update the dual variables, $\lambda_{(i,j)} (t)$, ${\mu}_{(i,j)}(t)$, $j \in \Omega_i$, according to eqn.~(8).
    \STATE Set $t \leftarrow t+1$
        %\UNTIL{some condition is met}
    \UNTIL a stopping criterion is met.
\end{algorithmic}
%\label{algorithm_SNG}
\end{algorithm}
\vspace{-5mm}

%Three remarks are now in order.

\mypar{Remark} With \textbf{AL--G}, the updates of the primal variables, on a fast time scale $k$, are asynchronous
and use gossip communication, while the updates of the dual variables, on a slow time scale $t$,
 are \emph{synchronous} and require no communication. Physically, this can be realized as follows. Each (physical) node in the network has a timer, and the timers
 of different nodes are synchronized. At the beginning
 of each (slow time scale) $t$-slot,
 nodes start the gossip communication phase and cooperatively run the P--AL--G algorithm. After
 a certain predetermined time elapsed, nodes stop the communication phase and, during an
 idle communication interval, they update the dual variables. After the idle time elapses,
  the nodes restart the communication phase at the beginning of the new $t$-slot.
  %Synchronicity
%  of the dual variable updates thus represents a drawback of the \textbf{AL--G}
%   algorithm, as it requires synchronized timers at all (physical) nodes.
%   Making the dual variable updates asynchronous thus represents an interesting research direction
%    that we plan to address in the future.
%}}

%{\color{blue}{***
%\mypar{Remark} The supergraph $G$, that collects all communication links in the network, is assumed \emph{static}, and this is
%standard assumption in networked systems supported by wireless communication***. The algorithm \textbf{AL--G}, however,
% still works if the supergraph is \emph{slowly varying}. Suppose that at some point in time
%  the link $\{i,j\} \in E$ permanently turns off. This simply means that the
%  variables that correspond to this link ($y_{ij}$, $y_{ji}$, $\lambda_{(i,j)}$,
%   $\lambda_{(j,i)}$, $\mu_{(i,j)}$, $\mu_{(j,i)}$)
%
%}}
%{\color{blue}{
%\mypar{2} We remark that the supergraph $G$ is assumed static. Static supergraph
% means certain stationarity, in a sense that
%}}

%
\mypar{Choice of $\rho$} It is known that, under Assumption~1, the method of multipliers~(6) converges
(i.e., any limit point of the sequence $x_i^\star(t)$, $t=0,1,...$,
   is a solution of~\eqref{eqn_original_primal_problem}) for any choice of the positive parameter $\rho$, \cite{RockafellarProof}, Theorem 2.1. It converges also if a sequence $\rho_{t+1} \geq  \rho_t$ is used, \cite{MultiplierSurvey}, Proposition~4. See \cite{BertsekasOptimization}, 4.2.2, for a discussion on the
choice of $\rho$. The method of multipliers still converges if we use different parameters $\rho = \rho_{\left( \lambda_{(i,j)},t\right)}$, $\rho = \rho_{\left( \mu_{(i,j)},t\right)}$, for each of the
  variables $\lambda_{(i,j)}$, $\mu_{(i,j)}$. This corresponds to
  replacing the quadratic terms $\rho\,\|x_i-y_{ij}\|^2$ and
  $\rho\,\|y_{ij}-y_{ji}\|^2$ in eqn. \eqref{eqn-lagrangian}
   with $\rho_{\left( \mu_{(i,j)},t\right)}\,\|x_i-y_{ij}\|^2$ and
  $\rho_{\left( \mu_{(i,j)},t\right)}\,\|y_{ij}-y_{ji}\|^2$, respectively.
  See reference \cite{AL_matrix_penalty} for details. (We still need $\rho_{\left( \lambda_{(i,j)},t\right)}
  \approx \rho_{\left( \lambda_{(j,i)},t\right)}$.) Equation \eqref{eqn_solving_wrt_x_i-simplified} becomes\footnote{Reference~\cite{AL_matrix_penalty} proves convergence
  of the method of multipliers with the positive definite matrix (possibly time-varying) penalty update, see eqn.~(1.5) in~\cite{AL_matrix_penalty}; the case of different (possibly time-varying) penalties assigned to different constraints is a special case
  of the matrix penalty, when the matrix is diagonal (possibly time-varying.)}
  \begin{equation}
  %\begin{equation}
 \begin{array}[+]{ll}
\mbox{minimize} & f_i(x_i) + \left(    \overline{\mu}_i - \sum_{j \in \Omega_i} \rho_{(\lambda_{(i,j)},t)}\, y_{ij}    \right)^{\top} x_i
 + \frac{1}{2} \left( \sum_{j \in \Omega_i} \rho_{(\mu_{(i,j)},t)} \right) \|x_i\|^2\\
\mbox{subject to} & x_i \in \mathcal{X}_i
\end{array}
\end{equation}
and equation~\eqref{eqn_solution_y_ij} becomes
%\begin{equation}
\begin{equation}
%\label{eqn_solution_y_ij}
y_{ij} = \frac{\rho_{(\lambda_{(i,j)},t)}}{\rho_{(\lambda_{(i,j)},t)}+\rho_{(\mu_{(i,j)},t)}}
\left( x_i+y_{ji}\right)+\frac{\mu_{(i,j)} - \mathrm{sign}(j-i)\lambda_{(i,j)}}{\rho_{(\lambda_{(i,j)},t)+\rho(\mu_{(i,j)},t)}}.
\end{equation}
%\end{equation}
  %

One possibility for adjusting the parameters $\rho_{(\mu_{(i,j)},t)}$ and $\rho_{(\lambda_{(i,j)},t)}$
in a distributed way is as follows. Each node $i$ adjusts (updates)
the parameters $\rho_{(\lambda_{(i,j),t})}$, $\rho_{(\mu_{(i,j),t})}$,
$j \in \Omega_i$. We focus on the parameter $\rho_{(\lambda_{(i,j)},t)}$; other parameters are updated similarly. Suppose that the current time is $t$. Node $i$ has stored in its memory the constraint violation at the previous time $t-1$ that equals $\epsilon_{(\lambda_{(i,j)},t-1)} = \|y_{ij}^F(t-1) -y_{ji}^L(t-1) \|$.
 Node $i$ calculates the constraint violation at the current time
 $\epsilon_{(\lambda_{(i,j)},t)} = \|y_{ij}^F(t) -y_{ji}^L(t) \|$.
  If $\epsilon_{(\lambda_{(i,j)},t)} / \epsilon_{(\lambda_{(i,j)},t-1)} \leq
  \kappa_{(\lambda_{(i,j)})}<1$,
  then the constraint violation is sufficiently decreased,
   and the parameter $\rho_{(\lambda_{(i,j)},t)}$
    remains unchanged, i.e., node $i$ sets $\rho_{(\lambda_{(i,j)},t)} = \rho_{(\lambda_{(i,j)},t-1)}$;
      otherwise, node $i$ increases the parameter, i.e., it sets
       $\rho_{(\lambda_{(i,j)},t)} =
      \sigma_{(\lambda_{(i,j)})} \rho_{(\lambda_{(i,j)},t-1)} $.
      The constants $\kappa_{(\lambda_{(i,j)})} \in (0,1)$
       and $\sigma_{(\lambda_{(i,j)})}>1$ are local
       to node $i$.
       %***We demonstrate the
%        effectiveness of this adjustment of the penalty parameters in simulations (Section~{VII}.)}}
%
%
%
%
%
%
%

%
%\mypar{Remark} It is well-known~\cite{BertsekasOptimization} that the method of multipliers increases
%the convergence speed if, instead of the constant parameter $\rho$, an increasing sequence $\rho_{t+1}>\rho_{t}$ is used. The \textbf{AL--G} (and also \textbf{AL--MG} and \textbf{AL--BG}) algorithm readily generalizes to the scenario of increasing $\rho$, replacing $\rho$ in eqns.~\eqref{eqn_solving_wrt_x_i-simplified}~and~\eqref{eqn_solution_y_ij} with~$\rho_t$.
%
%
%
%
\vspace{-1.5mm}
\section{Convergence analysis of the \textbf{AL--G} algorithm}
\label{convergence_analysis}
This section provides the convergence of the \textbf{AL--G} algorithm. Convergence of the multiplier method for the dual variable updates (on slow time scale $\{t\}$) in eqn.~\eqref{eqn_dual_update_rule_lambda_mu} is available in the literature, e.g.,~\cite{BertsekasOptimization}. We remark that, in practice, P--AL--G runs for a finite time, producing an inexact solution of~\eqref{eqn_solve_for_x_i_y_ij}. This, however, does not violate the convergence of the overall primal-dual \textbf{AL--G} scheme, as corroborated by simulations in Section~\ref{section_simulations}. The P--AL--G algorithm for the primal variable update (on the fast time scale $\{k\}$) is novel, and its convergence requires a novel proof. We proceed with the convergence analysis of P--AL--G. First, we state an additional assumption on the function $f(\cdot)$, and we state Theorem~4 on the almost sure convergence of P--AL--G.
 %that is organized as follows. In subsection~V-A, we state an additional assumption on the function $f(\cdot)$, and we state Theorem~4 on the almost sure convergence of the P--AL--G algorithm.
% Theorem~4 has two claims: the first is on the a.s. convergence to the set of solutions of~\eqref{eqn_solve_for_x_i_y_ij}, and
% the second is on the a.s. convergence to the optimal value of the cost function~\eqref{eqn-lagrangian}. Subsection~V-B states and proves a sequence of Lemmas needed to prove Theorem~4. Subsection~V-C proves the first part of Theorem~4. Finally, subsection~V-D states the Theorem on the expected number of inner iterations $k$ for convergence of
% the P--AL--G and proves the second part of Theorem~4.

\vspace{-1.5mm}
\mypar{Assumptions and statement of the result}
Recall the equivalent definition of the P--AL--G and the simplified
 notation in~\ref{subsubsec-abst-model-of-p-sng}. The P--AL--G algorithm
   solves the following optimization problem:
   \begin{equation}
\begin{array}[+]{ll}
\mbox{minimize} & L(z) \\
\mbox{subject to} & z \in C
\end{array}.
\label{eqn_opt_problem_L(z)}
\end{equation}
We will impose an additional, mild assumption
 on the function $L(z)$, and, consequently, on the function $f(\cdot)$.
 First, we give the definition of a block-optimal point.
 \begin{definition}[Block-optimal point]
\label{definition-block-optimal}
A point $z^{\bullet} =  \left( z_1^{\bullet}, z_2^{\bullet}, ..., z_{N+2M}^{\bullet} \right)$ is block-optimal for the problem~\eqref{eqn_opt_problem_L(z)} if: $
z_i^{\bullet} \in \mathrm{arg \, min}_{w_i \in C_i} L \left( z_1^{\bullet},z_2^{\bullet},...,z_{i-1}^{\bullet},w_i,z_{i+1}^{\bullet},...,z_{N+2M}^{\bullet} \right), \,i=1,...,N+2M.
$
\end{definition}
 \begin{assumption}
 \label{ass_on_f}
 If a point $z^\bullet$ is a block-optimal point of~\eqref{eqn_opt_problem_L(z)},
  then it is also a solution of~\eqref{eqn_opt_problem_L(z)}.
 \end{assumption}
 \mypar{Remark} Assumption~\ref{ass_on_f} is mild: it is valid if, e.g., $f_i(x)=k_i \|x\|_1+W_i(x)$,
   $k_i \geq 0$, where $W_i: {\mathbb R}^m \rightarrow \mathbb R$ is a continuously differentiable, convex function, and
   $\|x\|_1=\sum_{i=1}^N |x_i|$ is the $l_1$ norm of~$x$, \cite{Tseng_new}.
 %
%\subsection{Convergence results: algorithm I}
%
%

Define the set of optimal solutions $B=\{z^\star \in C:\,L(z^\star)=L^\star\}$, where $L^\star
=\inf_{z \in C} L(z)$.\footnote{Under Assumption~1, the set $B$ is nonempty and compact and $L^\star>-\infty$. This will be shown in Lemma~5. Clearly, $L^\star=L^\star \left(
\{\lambda_{(i,j)}\},\{\mu_{(i,j)}\}\right)$ and $B=B\left(
\{\lambda_{(i,j)}\},\{\mu_{(i,j)}\}\right)$ depend on the dual variables. For simplicity, we write only $L^\star$ and $B$.} Further, denote by $\mathrm{dist}(b,A)$ the Euclidean distance of
point $b \in {\mathbb R}^m$ to the set $A \subset {\mathbb R}^m$, i.e., $\mathrm{dist}(b,A) = \inf_{a \in A} \|a-b\|_2$, where $\|x\|_2$ is the Euclidean, $l_2$ norm. We now state the Theorem on almost sure (a.s.) convergence of the P--AL--G algorithm (Theorem~4,)
 after which we give some auxiliary Lemmas needed to prove Theorem~4.
\begin{theorem}
Let Assumptions~1 and 3 hold, and consider the optimization problem~\eqref{eqn_opt_problem_L(z)}
 (with fixed dual variables.) Consider the sequence $\left\{ z(k) \right\}_{k=0}^{\infty}$ generated by the algorithm P--AL--G.~Then:
\begin{enumerate}
\item
$
\lim_{k \rightarrow \infty} \mathrm{dist} \left(  z(k), B  \right) = 0,\,\,\mathrm{a.s.}
$
\item
$
\lim_{k \rightarrow \infty} L \left(  z(k)  \right) = L^\star,\,\,\mathrm{a.s.}
$
\end{enumerate}
\end{theorem}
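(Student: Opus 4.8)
The plan is to treat P--AL--G as a randomized Gauss--Seidel scheme with exact block minimization for the convex program~\eqref{eqn_opt_problem_L(z)}, and to run a supermartingale argument on the objective decrease. First I would invoke the forthcoming Lemma~5 to record the structural facts that $L(\cdot)$ is convex, continuous, and coercive on $C$, so that $L^\star>-\infty$, the solution set $B$ is nonempty and compact, and every sublevel set $\{z\in C:\,L(z)\le c\}$ is compact. Since at each slot $k$ the algorithm either leaves the iterate unchanged (when $\zeta(k)=0$) or replaces a single block $z_{\zeta(k)}$ by an exact minimizer of $L$ along that coordinate, the sequence $\{L(z(k))\}_k$ is nonincreasing; hence it converges (on every sample path) to a limit $L^\infty\ge L^\star$, and $z(k)$ stays in the compact set $\{z\in C:\,L(z)\le L(z(0))\}$, so $\{z(k)\}$ is bounded and has limit points.

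Next I would introduce, for each block $i=1,\dots,N+2M$, the value function $v_i(z)=\min_{w_i\in C_i}L(z_1,\dots,z_{i-1},w_i,z_{i+1},\dots,z_{N+2M})$ and the block improvement $\delta_i(z)=L(z)-v_i(z)\ge 0$. Because the restriction of $L$ to block $i$ is strictly convex and coercive --- the quadratic penalty terms in~\eqref{eqn-lagrangian} contribute a positive definite quadratic in each block --- the partial minimizer is unique, and a standard Berge-type/uniqueness argument shows that $v_i$, hence $\delta_i$, is continuous on $C$; moreover $\delta_i(z)=0$ for every $i$ exactly when $z$ is block-optimal in the sense of Definition~\ref{definition-block-optimal}. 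Writing $\mathcal F_k$ for the $\sigma$-algebra generated by $\zeta(1),\dots,\zeta(k)$ and using that $\zeta(k)$ is independent of $\mathcal F_{k-1}$ with $\mathrm{Prob}(\zeta(k)=i)=P_i>0$, one obtains
\begin{equation*}
\mathbb{E}\!\left[\,L(z(k-1))-L(z(k))\,\big|\,\mathcal F_{k-1}\right]=\sum_{i=1}^{N+2M}P_i\,\delta_i\!\left(z(k-1)\right),
\end{equation*}
so $\{L(z(k))\}$ is a supermartingale bounded below by $L^\star$. Summing this identity over $k$ and telescoping the left-hand side gives $\sum_{k\ge1}\mathbb{E}\big[\sum_{i}P_i\,\delta_i(z(k-1))\big]\le L(z(0))-L^\star<\infty$; by Tonelli, $\sum_{k}\sum_{i}P_i\,\delta_i(z(k-1))<\infty$ almost surely, and since each $P_i>0$ this forces $\delta_i(z(k))\to 0$ as $k\to\infty$, for every $i$, almost surely.

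To conclude, I would fix a sample path on which $\delta_i(z(k))\to0$ for all $i$, and let $z^\bullet$ be any limit point, $z(k_n)\to z^\bullet$. Continuity of $\delta_i$ gives $\delta_i(z^\bullet)=\lim_n\delta_i(z(k_n))=0$ for every $i$, so $z^\bullet$ is block-optimal; by Assumption~\ref{ass_on_f} it then belongs to $B$. Thus, almost surely, every limit point of $\{z(k)\}$ lies in $B$, and since $\{z(k)\}$ is confined to a compact set, this forces $\mathrm{dist}(z(k),B)\to0$ a.s.\ (otherwise some subsequence would stay bounded away from $B$ yet have a further limit in $B$, a contradiction), which is part~1. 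Finally, along $z(k_n)\to z^\bullet\in B$, continuity of $L$ yields $L^\infty=\lim_k L(z(k))=L(z^\bullet)=L^\star$ a.s., which is part~2.

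The delicate point I expect is the passage from ``the \emph{selected}-block improvement $\delta_{\zeta(k)}(z(k-1))$ tends to zero along the random subsequence of slots touching that block'' to ``$\delta_i(z(k))\to0$ for \emph{all} blocks simultaneously'': this is exactly what the conditional-expectation identity together with the strict positivity of every $P_i$ (including $P_0$, corresponding to link failures and idle ticks) provides, and it is where the i.i.d.\ structure of $\zeta(k)$ established in Section~II is essential. A secondary technical obstacle is the continuity of the value functions $v_i$ over the possibly unbounded constraint blocks $C_i={\mathbb R}^m$, which rests on the per-block coercivity and strict convexity supplied by the augmented-Lagrangian quadratic terms; and one must also verify in Lemma~5 that $L$ is coercive on all of $C$, so that limit points exist in the first place.
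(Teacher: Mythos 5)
Your argument is correct, and it reaches the conclusion by a genuinely different route than the paper. The paper fixes $\epsilon>0$, proves via Weierstrass that the expected one-step decrease $\phi(z)=\sum_i P_i\,(L(z)-L^i(z))$ is bounded below by a constant $a(\epsilon)>0$ on the compact set $\Gamma(\epsilon)=F\cap V_\epsilon(B)$, telescopes to get $\sum_k \mathrm{Prob}(z(k)\in\Gamma(\epsilon))<\infty$, and invokes Borel--Cantelli before intersecting over a countable sequence $\epsilon_s\downarrow 0$; part~2 is then derived separately through the hitting-time bound of Theorem~10. You instead telescope the same conditional-expectation identity to get almost-sure summability of $\sum_i P_i\,\delta_i(z(k))$, conclude $\delta_i(z(k))\to 0$ for every block, and finish with a deterministic limit-point argument using continuity of the value functions $v_i=L^i$ and Assumption~3; part~2 then falls out of monotonicity of $L(z(k))$ plus continuity of $L$ at a limit point in $B$. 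The shared ingredients are the formula for the expected decrease, coercivity/compactness of the initial sublevel set, continuity of $L^i$, and Assumption~3. What the paper's uniform-bound route buys is the quantitative by-product $\mathrm{E}[K_\epsilon]\le (L(z(0))-L^\star)/b(\epsilon)$ of Theorem~10, which your qualitative argument does not yield; what your route buys is a shorter path to the a.s.\ statements that dispenses with the $\epsilon$-indexed family of events and the final countable-intersection step. Two small remarks: the positivity of $P_0$ plays no role in your argument (only $P_i>0$ for $i=1,\dots,N+2M$ matters), and for the continuity of $v_i$ you could avoid the uniqueness/strong-convexity discussion entirely by noting that $v_i$ is the infimal projection of a jointly convex function with attained minimum, hence a finite convex function of the remaining coordinates and therefore continuous; alternatively, continuity on the relevant compact set follows from the paper's Lemma~12 exactly as in its Lemma~8.
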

%}}
%
%
%
%
%

\vspace{-1.5mm}
\mypar{Auxiliary Lemmas} Let $i_C: {\mathbb R}^m \rightarrow \mathbb R \cup \{+\infty\}$ be the indicator function of the set
$C$, i.e., $i_C(z)=0$ if $z \in C$ and $+\infty$ otherwise. It will be useful to define the
function $L+i_C:\,{\mathbb R}^{m(N+2M)} \rightarrow \mathbb R \cup \{+\infty\}$,
$(L+i_C)(z)=L(z)+i_C(z).$ Thus, the optimization problem~\eqref{eqn_opt_problem_L(z)} is equivalent to the unconstrained
minimization of $(L+i_C)(\cdot)$. The following Lemma establishes properties of the set of solutions $B$, the optimal value $L^\star$, and the function $(L+i_C)(\cdot)$.
\begin{lemma}
Let Assumption~1 hold. The functions $L(z)$ and $(L+i_C)(z)$ are coercive, $L^\star>-\infty$,
 and the set $B$ is nonempty and compact.
\end{lemma}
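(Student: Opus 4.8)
The plan is to prove coercivity of $L$ first, since all remaining conclusions follow from it by standard arguments. The key observation is that, although $L(z)$ as written in~\eqref{eqn-lagrangian} is not manifestly coercive in all $m(N+2M)$ primal coordinates (the linear and quadratic terms in $x_i$ alone suggest coercivity, but the $y_{ij}$ variables appear only through difference terms), the quadratic penalty terms $\frac{1}{2}\rho\sum_{(i,j)\in E_d}\|x_i-y_{ij}\|^2$ together with $\frac{1}{2}\rho\sum_{\{i,j\}\in E}\|y_{ij}-y_{ji}\|^2$ do in fact control every coordinate once the coercivity of $\sum_i f_i(x_i)$ is used. Concretely, I would argue: suppose $\|z^{(n)}\|\to\infty$ along some sequence. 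If some $\|x_i^{(n)}\|\to\infty$, then $f_i(x_i^{(n)})\to\infty$ by Assumption~1.1 (coercivity of each $f_i$), while the other $f_j$ are bounded below (a convex finite-valued function on ${\mathbb R}^m$ is bounded below on... actually one must be slightly careful: convex coercive functions attain a minimum, hence are bounded below) and the penalty terms are nonnegative; the cross terms $\mu_{(i,j)}^\top(x_i-y_{ij})$ are dominated by the quadratic $\frac{\rho}{2}\|x_i-y_{ij}\|^2$. So it remains to handle the case where all $x_i^{(n)}$ stay bounded but some $\|y_{ij}^{(n)}\|\to\infty$: then $\|x_i^{(n)}-y_{ij}^{(n)}\|\to\infty$, so the term $\frac{\rho}{2}\|x_i^{(n)}-y_{ij}^{(n)}\|^2 + \mu_{(i,j)}^\top(x_i^{(n)}-y_{ij}^{(n)})$ plus the bounded-below $f$'s forces $L(z^{(n)})\to\infty$ (again the $\lambda$ linear terms in $y_{ij}$ are linear, hence dominated by this quadratic). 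This establishes that $L$ is coercive.

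Next I would deduce the remaining claims. Since $i_C\ge 0$ and $C\neq\emptyset$ (by Assumption~1.2, $\mathcal{X}=\cap_i\mathcal{X}_i\neq\emptyset$, so in particular each $\mathcal{X}_i\neq\emptyset$, and the $y$-coordinates range over all of ${\mathbb R}^m$), the function $L+i_C$ dominates $L$ minus a constant on its domain and is $+\infty$ off $C$; coercivity of $L$ immediately gives coercivity of $L+i_C$. Both $L$ and $i_C$ are lower semicontinuous ($L$ is continuous since each $f_i$ is finite-valued convex hence continuous, and $i_C$ is lsc because $C$ is closed), so $L+i_C$ is lsc. A coercive, lsc, proper function on ${\mathbb R}^{m(N+2M)}$ attains its infimum over a nonempty compact set; that infimum is $L^\star$, which is therefore finite (in particular $L^\star>-\infty$), and the argmin set is exactly $B$, which is thus nonempty and compact. (One can note $B$ is also convex by convexity of $L+i_C$, though the statement only asks for nonempty and compact.)

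The main obstacle, and the step requiring genuine care rather than routine invocation, is the coercivity argument for the $y$-variables: one has to be sure that the way the $y_{ij}$ enter the Lagrangian — only through $\|x_i-y_{ij}\|^2$, $\|y_{ij}-y_{ji}\|^2$, and the linear terms — actually pins down each $y_{ij}$ and not merely certain linear combinations. The clean way to see this is to group, for fixed $(i,j)$, the terms $f_i(x_i)+\mu_{(i,j)}^\top(x_i-y_{ij})+\frac{\rho}{2}\|x_i-y_{ij}\|^2 - \mathrm{sign}(j-i)\lambda_{(i,j)}^\top y_{ij}$ and observe that as a function of the pair $(x_i,y_{ij})$ (treating $y_{ji}$ as possibly large but then its own analogous term blows up) this is coercive: writing $u=x_i$, $v=x_i-y_{ij}$, the expression is $f_i(u)+\mu_{(i,j)}^\top v+\frac{\rho}{2}\|v\|^2-\mathrm{sign}(j-i)\lambda_{(i,j)}^\top(u-v)$, coercive in $(u,v)$ hence in $(x_i,y_{ij})$, and summing such grouped contributions (each penalty term used once, which is possible since each $\|x_i-y_{ij}\|^2$ and each $\|y_{ij}-y_{ji}\|^2$ appears exactly once) covers all variables. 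I would present this grouping explicitly to make the coercivity transparent, then state the lsc/proper conclusions as immediate.
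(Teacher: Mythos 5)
Your overall route is the same as the paper's: split the coercivity argument according to whether some $\|x_i\|\to\infty$ or only some $\|y_{ij}\|\to\infty$, then get $L+i_C$ coercive and closed and invoke Weierstrass for $L^\star>-\infty$ and for $B$ nonempty and compact. The second half of your argument (lsc $+$ coercive $\Rightarrow$ compact nonempty argmin) is fine and matches the paper. You are also right that the delicate point is the linear dual terms, which the paper's own two-line case analysis passes over in silence.

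However, the specific device you propose to control them fails. In your per-arc grouping you pair the bare linear term $\mp\,\mathrm{sign}(j-i)\lambda_{(i,j)}^\top y_{ij}$ with $f_i(x_i)+\mu_{(i,j)}^\top(x_i-y_{ij})+\tfrac{\rho}{2}\|x_i-y_{ij}\|^2$ and assert the result is coercive in $(x_i,y_{ij})$. It is not: a convex coercive $f_i$ is only guaranteed linear growth (e.g.\ $f_i(x)=k_i\|x\|_1+\sqrt{1+\|x\|^2}$ is admissible under Assumption~1 and under the paper's own remark after Assumption~3), so along the ray $x_i=y_{ij}=tv$ your grouped expression behaves like $f_i(tv)\mp t\,\lambda_{(i,j)}^\top v$, which tends to $-\infty$ whenever $\|\lambda_{(i,j)}\|$ exceeds the slope of $f_i$ in the direction $v$. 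The cancellation you have discarded is exactly what saves the day: keep $\lambda_{(i,j)}^\top y_{ij}-\lambda_{(j,i)}^\top y_{ji}$ paired with $\tfrac{\rho}{2}\|y_{ij}-y_{ji}\|^2$ (and each $\mu_{(i,j)}^\top(x_i-y_{ij})$ with $\tfrac{\rho}{2}\|x_i-y_{ij}\|^2$); completing the square shows each such bracket is bounded below by a constant, after which the paper's two cases go through verbatim ($f_i\to\infty$ dominates in case 1, the quadratic $\|x_i-y_{ij}\|^2\to\infty$ dominates in case 2). Note that this repair uses $\lambda_{(i,j)}=\lambda_{(j,i)}$ — with a genuine mismatch the residual $(\lambda_{(i,j)}-\lambda_{(j,i)})^\top y_{ji}$ is uncontrolled and $L$ can in fact fail to be coercive for linearly growing $f_i$; that caveat is a gap in the paper's statement as much as in your proof, but your grouping breaks even in the matched case, so it needs to be replaced rather than merely caveated.
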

\begin{proof}
The function $L(z)$ (given in eqn.~\eqref{eqn-lagrangian}) is coercive. To see
this, consider an arbitrary sequence $\{z(j)\}_{j=1}^{\infty}$, where $\|z(j)\| \rightarrow \infty$ as $j \rightarrow \infty$.
 We must show that $L(z(j))\rightarrow \infty$. Consider two possible cases: 1) there is $i\in \{1,...,N\}$ such that $\|x_i(j)\|
 \rightarrow \infty$; and 2) there is no $i\in\{1,...,N\}$ such that $\|x_i(j)\| \rightarrow \infty$. For case 1), pick an $i$ such that $\|x_i(j)\|
 \rightarrow \infty$; then $f_i(x_i(j))\rightarrow \infty$, and hence, $L(z(j))\rightarrow \infty$. In case 2),
  there exists a pair $(i,l)$ such that $\|y_{il}\| \rightarrow \infty$; but then, as $x_i(j)$
   is bounded, we have that $\|x_i(j)-y_{il}(j)\|^2 \rightarrow \infty$, and hence, $L(z(j)) \rightarrow \infty.$
  The function
 $(L+i_C)(z)$ is coercive because $(L+i_C)(z)\geq L(z)$, $\forall z$, and $L(z)$
  is coercive.
  The function $(L+i_C)(z)$ is a closed\footnote{
  A function $q: {\mathbb R}^m \rightarrow \mathbb R \cup \{+\infty\}$ is closed if
  its epigraph $\mathrm{epi}(q) = \{(x,v):\,q(x)\leq v\}$ is a closed subset of ${\mathbb R}^{m+1}$.
  } (convex) function, because $L(z)$
   is clearly a closed function and $i_C(z)$ is a closed function because $C$ is a closed set; hence,
    $(L+i_C)(z)$ is closed function as a sum of two closed functions.Hence, $B$ is
      a closed set, as a sublevel set of the closed function $(L+i_C)(z)$. The set $B$ is bounded as a sublevel set of a
  coercive function $(L+i_C)(z)$. Hence, $B$ is closed and bounded, and thus, compact. We have that $L^\star>-\infty$ (and $B$ is non empty) as $L(z)$ is a continuous, convex, and coercive~on~${\mathbb R}^{m(N+2M)}$.
\end{proof}
Define $U_{\epsilon}(B) = \left \{ z: \mathrm{dist}(z,B)< \epsilon \right\}$, and let $V_{\epsilon}(B)$ be its complement, i.e.,
$V_{\epsilon}(B) = {\mathbb R}^m \backslash U_{\epsilon}(B)$. Further, denote by $S$ and $F$ the initial sublevel sets of $L$ and $L+i_C$, respectively, i.e.,
$S=\left\{ z: L(z) \leq L(z(0)) \right\}$, and $F = \left\{ z:\,\,(L+i_C)(z) \leq L(z(0)) \right\} = S \cap C$,
where $z(0) \in C$ is a feasible, deterministic, initial point. We remark that, given $z(0)$, any realization of the sequence $\left\{ z(k) \right\}_{k=0}^{\infty}$ stays inside the set $F$. This is true because $L(z(k))$
is a nonincreasing sequence by the definition of the algorithm P--AL--G and because any point $z(k)$ is feasible. Define also the set  $\Gamma(\epsilon)=F \cap V_{\epsilon}(B)$. We now remark that, by construction of the P--AL--G algorithm, the sequence of iterates $z(k)$ generated by P--AL--G is a Markov sequence.
%
%
%\begin{lemma}[Markovian sequence]
%The sequence $\left\{ z(k) \right\}$ generated by the algorithm P--AL--G is Markov.
%\end{lemma}
%
%
%
We are interested in the expected decrease of the function $L(\cdot)$ in one algorithm step, given that the current point is equal to $z(k)=z$:
\begin{equation}
\psi(z) = \mathrm{E} \left[   L \left( z(k+1) \right)  | z(k)=z \right]-L(z).
\end{equation}
Denote by $L^i(z)$ the block-optimal value of the function $L(z)$ after minimization with respect to $z_i$:
\begin{equation}
\label{eqn_L^i(z)}
L^i(z) = \min_{w_i \in C_i} L \left( z_1,z_2,...,z_{i-1},w_i,z_{i+1},...,z_{N+2M} \right)
\end{equation}
We have, by the definition of P--AL--G, that (recall the definition of $P_i$ above Algorithm~2:)
\begin{eqnarray}
\psi \left( z \right)   &=& \sum_{i=1}^{N+2M} P_i\, \left( L^i(z) - L(z)  \right).
\label{eqn_expression_for_psi(z)}
\end{eqnarray}
%
%%
%\begin{proof}
%The proof follows from the definition of Algorithm P--AL--G.
%\end{proof}
%
%
Define $\phi(z) = - \psi(z)$. From eqn.~\eqref{eqn_expression_for_psi(z)}, it can be seen that $\phi(z) \geq 0$, for any $z \in C$. We will show that $\phi(z)$ is strictly positive on the set $\Gamma(\epsilon)$ for any positive $\epsilon$.
\begin{lemma}
\label{Lemma_inf}
\begin{equation}
\inf_{z \in \Gamma(\epsilon)} \phi(z) = a(\epsilon)>0
\label{eqn_inf_phi}
\end{equation}
\end{lemma}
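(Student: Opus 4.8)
The strategy is to reduce the claim to compactness plus a continuity/strict-positivity argument. First I would argue that $\phi$ is continuous on $C$. Each block-optimal value $L^i(z)$ is the value function of a strictly convex quadratic-plus-$f_i$ minimization over the closed convex set $C_i$ (cf.\ eqns.~\eqref{eqn_solving_wrt_x_i-simplified} and~\eqref{eqn_solution_y_ij}); by standard results on parametric optimization the minimizer is unique and depends continuously on $z$, and hence $L^i(z)$ is continuous in $z$. Therefore $\phi(z)=\sum_{i=1}^{N+2M}P_i\bigl(L(z)-L^i(z)\bigr)$ is continuous on $C$, and we already know $\phi(z)\geq 0$ there.

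Next I would address compactness of the infimizing domain. The set $\Gamma(\epsilon)=F\cap V_\epsilon(B)$ is the intersection of the closed set $F=S\cap C$ with the closed set $V_\epsilon(B)$; by Lemma~5, $L$ (hence $L+i_C$) is coercive, so its sublevel set $F$ is bounded, and thus $\Gamma(\epsilon)$ is compact. A continuous function attains its infimum on a nonempty compact set, so $a(\epsilon)=\phi(z_\epsilon)$ for some $z_\epsilon\in\Gamma(\epsilon)$ (the case $\Gamma(\epsilon)=\emptyset$ is trivial, as then the infimum is $+\infty>0$).

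It remains to show $\phi(z_\epsilon)>0$, i.e.\ that $\phi$ cannot vanish on $\Gamma(\epsilon)$. Since each summand $P_i\bigl(L(z)-L^i(z)\bigr)$ is nonnegative and $P_i>0$ for all $i$, $\phi(z_\epsilon)=0$ forces $L^i(z_\epsilon)=L(z_\epsilon)$ for every $i=1,\dots,N+2M$; that is, $z_\epsilon$ is block-optimal in the sense of Definition~\ref{definition-block-optimal}. By Assumption~\ref{ass_on_f}, a block-optimal point is a global solution of~\eqref{eqn_opt_problem_L(z)}, so $z_\epsilon\in B$. But $z_\epsilon\in V_\epsilon(B)$ means $\mathrm{dist}(z_\epsilon,B)\geq\epsilon>0$, a contradiction. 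Hence $a(\epsilon)=\phi(z_\epsilon)>0$.

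The main obstacle is the continuity of $L^i(\cdot)$: one must invoke the right parametric-optimization fact, using that the objective in the $z_i$-subproblem is strongly convex in $z_i$ (the quadratic penalty term $\tfrac12\rho d_i\|x_i\|^2$, resp.\ the quadratic in $y_{ij}$, guarantees a unique minimizer and continuity in the parameters even though $f_i$ may be nonsmooth), so that the value function inherits continuity; everything else is a routine compactness-plus-contradiction argument.
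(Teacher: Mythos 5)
Your proof is correct and follows essentially the same route as the paper's: compactness of $\Gamma(\epsilon)$, continuity of $\phi$, the Weierstrass theorem, and the block-optimality contradiction via Assumption~3. The only divergence is in the supporting continuity step for $L^i$: you invoke strong convexity of each block subproblem (unique, continuously varying minimizer), whereas the paper restricts the inner minimization to the compact projections $\mathcal{P}_i(\Gamma(\epsilon))$ and applies a partial-minimization continuity lemma; both justifications are valid.
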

We first show that $\Gamma(\epsilon)$ is compact and that $L^i$ is
continuous on $\Gamma(\epsilon)$ (latter proof is in the Appendix.)
\begin{lemma}[Compactness of $\Gamma(\epsilon)$]
The set $\Gamma(\epsilon)$ is compact, for all $\epsilon>0$.
\end{lemma}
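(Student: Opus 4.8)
The plan is to exhibit $\Gamma(\epsilon)$ as the intersection of a compact set and a closed set. First I would observe that $F = S \cap C = \{z : (L+i_C)(z) \leq L(z(0))\}$ is exactly a sublevel set of the function $L+i_C$. By Lemma~5, $L+i_C$ is closed and coercive. Closedness gives that every sublevel set of $L+i_C$ is a closed set, so $F$ is closed; coercivity gives that every sublevel set is bounded, so $F$ is bounded. Hence $F$ is compact. (One should note that $F$ is nonempty, since $z(0) \in C$ satisfies $(L+i_C)(z(0)) = L(z(0)) \leq L(z(0))$, but nonemptiness is not needed for compactness.)

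Next I would check that $V_{\epsilon}(B)$ is closed. The map $z \mapsto \mathrm{dist}(z,B)$ is continuous (indeed $1$-Lipschitz) on ${\mathbb R}^{m(N+2M)}$, so $U_{\epsilon}(B) = \{z : \mathrm{dist}(z,B) < \epsilon\}$ is the preimage of the open half-line $(-\infty,\epsilon)$ and is therefore open; its complement $V_{\epsilon}(B)$ is closed.

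Finally, $\Gamma(\epsilon) = F \cap V_{\epsilon}(B)$ is the intersection of a compact set $F$ with a closed set $V_{\epsilon}(B)$. The intersection of a closed set and a compact set is compact (it is a closed subset of a compact set, hence compact), which gives the claim for every $\epsilon > 0$.

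There is essentially no obstacle here: the statement is a routine topological consequence of Lemma~5. The only point requiring a word of care is invoking closedness of $L+i_C$ to conclude that the sublevel set $F$ is genuinely closed (as opposed to merely $F = S \cap C$ with $C$ closed and $S$ possibly not); but this was already established in the proof of Lemma~5, so I would simply cite it. If one prefers to avoid the ``closed function'' language, the same conclusion follows from continuity of $L$ on ${\mathbb R}^{m(N+2M)}$ together with closedness of $C$: $S$ is closed as a sublevel set of the continuous $L$, $C$ is closed by Assumption~1.2, so $F = S \cap C$ is closed, and boundedness of $F$ again follows from coercivity of $L$ (Lemma~5).
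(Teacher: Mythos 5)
Your proof is correct and follows essentially the same route as the paper: both establish that $F$ is closed (as a sublevel set of the closed function $L+i_C$) and bounded (by coercivity), that $V_{\epsilon}(B)$ is closed, and conclude compactness of the intersection. Your added justification that $V_{\epsilon}(B)$ is closed via continuity of $z \mapsto \mathrm{dist}(z,B)$ is a small detail the paper leaves implicit, but the argument is the same.
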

\begin{proof}
We must show that $\Gamma(\epsilon)$ is closed and bounded. It is closed because it is the
intersection of the closed sets $F$ and $V_{\epsilon}(B)$. It is bounded because $\Gamma(\epsilon) \subset F$, and $F$ is bounded. The set $F$ is bounded as a sublevel set of the coercive function $L+i_C$. 
The set $F$ is closed as a sublevel set of the closed function $L+i_C$.
\end{proof}
\begin{lemma}[Continuity of $L^i$]
The function $L^i: \,\Gamma(\epsilon) \rightarrow {\mathbb R}$ is continuous, $i=1,...,N+2M$.
\end{lemma}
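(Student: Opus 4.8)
The plan is to treat the two kinds of block-coordinates of the abstract model of P--AL--G separately. For a $y$-block, $z_l=y_{ij}$ with $l\in\{N+1,\dots,N+2M\}$, the minimization defining $L^l$ is the unconstrained convex quadratic~\eqref{solve_wrt_y_ij}, whose unique minimizer is the affine-in-$z$ expression~\eqref{eqn_solution_y_ij}; substituting that minimizer back into $L$ exhibits $L^l$ as the composition of an affine map with the continuous function $L$ (continuity of $L$ was recorded in Lemma~5), so $L^l$ is continuous on all of ${\mathbb R}^{m(N+2M)}$, hence on $\Gamma(\epsilon)$. The substance lies in the $x$-blocks, which I treat next.

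Fix $i\in\{1,\dots,N\}$ and set $g_i(w_i,z):=L(z_1,\dots,z_{i-1},w_i,z_{i+1},\dots,z_{N+2M})$, which is jointly continuous in $(w_i,z)$ since $L$ is continuous. The difficulty is that $C_i=\mathcal{X}_i$ may be unbounded, so I first reduce the inner minimization to a compact set independent of $z$. Collecting the $w_i$-dependent terms of $L$ exactly as in the derivation of~\eqref{eqn_solving_wrt_x_i-simplified} gives
\[
g_i(w_i,z)=f_i(w_i)+c_i(z)^{\top}w_i+\tfrac12\,\rho\, d_i\,\|w_i\|^2+r_i(z),
\]
where $c_i(z)=\overline{\mu}_i-\rho\,\overline{y}_i$ is affine in $z$ and $r_i(z)$ gathers the terms not involving $w_i$. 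By the preceding Lemma, $\Gamma(\epsilon)$ is compact, so $\|c_i(z)\|$ and $|r_i(z)|$ are bounded on $\Gamma(\epsilon)$; since $f_i$ is convex and coercive it is bounded below, and $\rho\,d_i>0$ (the supergraph is connected, so $d_i\ge1$), so the quadratic term dominates the bounded linear perturbation and $g_i(w_i,z)\to+\infty$ as $\|w_i\|\to\infty$ \emph{uniformly} over $z\in\Gamma(\epsilon)$. Hence there is $R'<\infty$ with $g_i(w_i,z)>L(z(0))$ whenever $\|w_i\|>R'$ and $z\in\Gamma(\epsilon)$. Since $z\in\Gamma(\epsilon)\subseteq F$ implies $L^i(z)\le L(z)\le L(z(0))$ ($z_i$ is feasible in the block problem and $z\in S$), every minimizer of $g_i(\cdot,z)$ over $C_i$ lies in the fixed compact set $K_i:=C_i\cap\{\|w_i\|\le R'\}$, so $L^i(z)=\min_{w_i\in K_i}g_i(w_i,z)$ for all $z\in\Gamma(\epsilon)$.

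With the minimization now over a constant compact set, continuity of $L^i$ on $\Gamma(\epsilon)$ follows from a standard parametric-minimum fact (Berge's maximum theorem applied to the jointly continuous $g_i$ with the constant, hence continuous, constraint correspondence $z\mapsto K_i$). Equivalently, I would argue the two inequalities by hand: upper semicontinuity by inserting, along any $z^{(n)}\to z$, a minimizer for $z$ into $g_i(\cdot,z^{(n)})$ and using continuity of $g_i$; lower semicontinuity by extracting from the minimizers for $z^{(n)}$ (all contained in the compact $K_i$) a convergent subsequence and again using continuity of $g_i$.

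I expect the crux to be the uniform compactification for the $x$-blocks --- producing a single radius $R'$ valid for every $z\in\Gamma(\epsilon)$. This is exactly where $\rho>0$ is used: the augmented-Lagrangian penalty contributes the term $\tfrac12\,\rho\, d_i\,\|x_i\|^2$, which overwhelms the bounded linear perturbation $c_i(z)^{\top}x_i$ and forces uniform coercivity of $g_i(\cdot,z)$ no matter how slowly the merely coercive $f_i$ grows; without it, a slowly growing $f_i$ together with a varying linear term could drive minimizers to infinity. Once that is in place, the rest is routine, relying only on Lemma~5, the compactness of $\Gamma(\epsilon)$, and the sublevel-set bound $L^i(z)\le L(z(0))$.
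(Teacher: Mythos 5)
Your proof is correct, and its skeleton matches the paper's: reduce the inner minimization to a \emph{fixed} compact set and then invoke the standard continuity-of-a-parametric-minimum result (your Berge argument is exactly the paper's Lemma~12). The difference is in how the compact set is produced. The paper projects the compact set $\Gamma(\epsilon)$ onto the $i$-th block, asserts that the minimizer in the definition of $L^i(z)$ lies in $\mathcal{P}_i(\Gamma(\epsilon))$, and applies Lemma~12 over the product $C_\epsilon$ of these projections; this handles all $N+2M$ blocks uniformly, with no case split and no use of the explicit form of $L$. You instead treat the $y$-blocks by the closed form~\eqref{eqn_solution_y_ij} and, for the $x$-blocks, derive a uniform radius $R'$ from the coercivity of $f_i$, the boundedness of $\overline{\mu}_i-\rho\,\overline{y}_i$ on the compact $\Gamma(\epsilon)$, the quadratic penalty $\tfrac12\rho d_i\|w_i\|^2$, and the sublevel bound $L^i(z)\le L(z)\le L(z(0))$. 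Your version is more work but is also more airtight at the one point the paper glosses over: strictly speaking, the block minimizer from $z\in\Gamma(\epsilon)$ lands in the sublevel set $F$ but need not stay in $V_\epsilon(B)$, so the paper's "attained on $\mathcal{P}_i(\Gamma(\epsilon))$" should really read $\mathcal{P}_i(F)$ (which is also compact, so their argument survives); your explicit coercivity bound sidesteps this entirely. Both approaches are sound; the paper's buys brevity and uniformity across blocks, yours buys an explicit, self-contained justification of the compactification.
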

%
%See the Appendix.
%
%
\begin{proof}[Proof of Lemma~\ref{Lemma_inf}]
First, we show that $\phi(z)>0$, for all $z \in \Gamma (\epsilon)$. Suppose not. Then, we have: $L^i(z) = L(z)$, for all $i$. This means that the point $z \in \Gamma(\epsilon)$ is block-optimal; Then, by
Assumption~3, the point $z$ is an optimal solution of~\eqref{eqn_opt_problem_L(z)}. This is a contradiction and $\phi(z)>0$, for all $z \in \Gamma(\epsilon)$.
Consider the infimum in eqn.~\eqref{eqn_inf_phi}. The infimum is over the compact set and the function $\phi(\cdot)$ is continuous (as a scaled sum of continuous functions $L^i(\cdot)$); thus, by the Weierstrass theorem, the infimum is attained for some $z^{\bullet} \in \Gamma(\epsilon)$ and $\phi(z^{\bullet}) = a(\epsilon) >0$.
\end{proof}
%
%
%
%%
%\begin{figure}[thpb]
%      \centering
%      \includegraphics[height=2.5 in,width=3.7in]{dokaz.pdf}
%      \caption{ With the proof of randomized block coordinate minimization algorithm }
%      \label{figure-proof}
%\end{figure}
%
%
%
%
%
\vspace{-1mm}
\mypar{Proof of Theorem~4--1} Recall the expected decrease of the function $L(\cdot)$, $\psi(z)$. We have:
\begin{eqnarray}
\mathrm{E} \left[ \psi\left( z(k) \right) \right]=\mathrm{E}  \left[  \mathrm{E} \left[ L \left( z(k+1) \right) | z(k) \right] - L \left( z(k) \right) \right]
= \mathrm{E} \left[  L \left( z(k+1) \right)   \right] - \mathrm{E} \left[  L \left( z(k) \right)   \right].
\end{eqnarray}
On the other hand, we have that $\mathrm{E}\left[ \psi(z(k))\right]$ equals:
\begin{eqnarray}
 \mathrm{E} \left[  \psi \left( z(k) \right) | z(k) \in {\Gamma(\epsilon) }\right] \,
\mathrm{Prob} \left( z(k) \in {\Gamma(\epsilon )}\right) + \mathrm{E} \left[  \psi \left( z(k) \right) | z(k) \notin {\Gamma(\epsilon) }\right] \,\mathrm{Prob} \left( z(k) \notin {\Gamma(\epsilon )}\right).
\end{eqnarray}
Denote by $p_k = \mathrm{Prob} \left( z(k) \in \Gamma(\epsilon) \right)$.
Since $\psi\left( z(k) \right) \leq -a (\epsilon)$, for $z(k) \in {\Gamma(\epsilon)}$, and $\psi\left( z(k) \right) \leq 0$, for any $z(k) $, we have that:
%
%\begin{eqnarray}
$\mathrm{E} \left[  \psi \left( z(k) \right) \right] = \mathrm{E} \left[  L(z(k+1))   \right]  - \mathrm{E} \left[ L(z(k)) \right]\leq -a(\epsilon)\,p_k;$
%\label{eqn_inequalities_psi}
%\end{eqnarray}
%
summing up latter inequality for $j=0$ up to $j=k-1$, we get:
\begin{equation}
\mathrm{E} \left[  L(z(k)) \right] -   L(z(0))  \leq  -a(\epsilon)  \sum_{j=0}^{k-1}p_k, \, \forall j \geq 0.
\end{equation}
The last inequality implies that: $
\sum_{k=0}^{\infty} p_k \leq \frac{1}{a(\epsilon)} \, \left(  L \left( z(0) \right) -L^\star \right) < \infty.
$
Thus, by the first Borel-Cantelli Lemma, $\mathrm{Prob} \left( z(k) \in {\Gamma(\epsilon) ,\,\,\mathrm{infinitely\,\,often}}\right) = 0$, $\forall \epsilon>0$. Thus, $\mathrm{Prob} \left( \mathcal{A}_{\epsilon}\right)=1$, $\forall \epsilon>0$,
 where the event $\mathcal{A}_{\epsilon}$ is: $\mathcal{A}_{\epsilon}:=\left\{ \mathrm{the \,\,tail\,\,of\,\,the\,\,sequence\,\,}z(k) \mathrm{\,\,belongs\,\,to\,\,}
 U_{\epsilon}(B) \right\}$. Consider the event $\mathcal{A}:=\cap_{s=1}^\infty \mathcal{A}_{\epsilon_s}$,
  where $\epsilon_s$ is a decreasing sequence, converging to $0$. Then, $\mathrm{Prob}
  \left( \mathcal{A}\right)=\mathrm{Prob}\left( \cap_{s=1}^\infty \mathcal{A}_{\epsilon_s}\right)=
  \lim_{s \rightarrow \infty} \mathrm{Prob}
  \left( \mathcal{A}_{\epsilon_s}\right)=\lim_{s \rightarrow \infty} 1=1$. Now, the event $\mathcal{B}:=\left\{
 \lim_{k \rightarrow \infty}\mathrm{dist}(z(k),B)=0\right\}$ is equal to
 $\mathcal{A}$, and thus $\mathrm{Prob}\left( \mathcal{B}\right)=1$.
\vspace{-1.2mm}

\mypar{Expected number of iterations for convergence: Proof of Theorem~4--2} Consider now the sets $\mathcal{U}_{\epsilon}(B) = \left \{ z: L(z) \leq  \epsilon + L^\star \right \}$ and $\mathcal{V}_{\epsilon}(B) = {\mathbb R}^m \backslash \mathcal{U}_{\epsilon}(B)$ and define the sets $\mathcal{F}$ and $\mathcal{G}(\epsilon)$ as $\mathcal{F} = C \cap S$ and
 $\mathcal{G}(\epsilon) = \mathcal{F} \cap \mathcal{V}_{\epsilon}(B)$. Similarly as in Lemmas~8, we can
  obtain that
 \begin{equation}
 \inf_{z \in \mathcal{G}(\epsilon)} \phi(z) =  {b}(\epsilon) >0.
 \end{equation}
We remark that, once $z(k)$ enters the set $\mathcal{U}_{\epsilon}(B)$ at $k=K_{\epsilon}$, it never leaves this set, i.e., $z(k) \in \mathcal{U}_{\epsilon}(B)$, for all $k \geq K_{\epsilon}$. Of course, the
 integer $K_{\epsilon}$ is random. In the next Theorem, we provide an upper bound on the expected
 value of $K_{\epsilon}$ (the time slot when $z(k)$ enters the set $\mathcal{U}_{\epsilon}(B)$,) thus giving a stopping criterion (in certain sense) of the algorithm P--AL--G.
 \vspace{-3mm}
\begin{theorem}[  Expected number of iterations for convergence   ]
Consider the sequence $\left\{ z(k) \right\}_{k=0}^{\infty}$ generated by the algorithm P--AL--G. Then, we have:
\begin{equation}
\label{eqn-exp-num-iter}
\mathrm{E} \left[ K_{\epsilon} \right] \leq \frac  { L  \left(  z(0) \right) - L^\star} {b({\epsilon})}.
\end{equation}
\end{theorem}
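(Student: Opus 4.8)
The plan is to run essentially the same argument as in the proof of Theorem~4--1, but tracking the first hitting time $K_\epsilon$ of the absorbing set $\mathcal{U}_\epsilon(B)=\{z:L(z)\le\epsilon+L^\star\}$ instead of summing hitting probabilities to infinity. First I would record two structural facts already in the text: every realization of $\{z(k)\}$ stays in $\mathcal{F}=C\cap S$ (because $L(z(k))$ is nonincreasing and each $z(k)$ is feasible), and $\mathcal{U}_\epsilon(B)$ is absorbing, so $z(k)\in\mathcal{U}_\epsilon(B)$ for all $k\ge K_\epsilon$. Together these give that the event $\{z(k)\in\mathcal{G}(\epsilon)\}$ coincides with $\{K_\epsilon>k\}$. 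Hence, writing $p_k=\mathrm{Prob}(z(k)\in\mathcal{G}(\epsilon))$ and using the tail-sum formula for the nonnegative integer-valued random variable $K_\epsilon$, one has $\mathrm{E}[K_\epsilon]=\sum_{k=0}^{\infty}\mathrm{Prob}(K_\epsilon>k)=\sum_{k=0}^{\infty}p_k$, so it suffices to bound this series.

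The second step is the one-step expected-decrease estimate. By the tower property, $\mathrm{E}[\psi(z(k))]=\mathrm{E}[L(z(k+1))]-\mathrm{E}[L(z(k))]$ (all expectations finite since $z(k)\in\mathcal{F}$, which is compact, and $L$ is continuous). Since $\psi(z)=-\phi(z)\le-b(\epsilon)$ for $z\in\mathcal{G}(\epsilon)$ by the infimum bound $\inf_{z\in\mathcal{G}(\epsilon)}\phi(z)=b(\epsilon)>0$, while $\psi(z)\le0$ for every $z\in C$, conditioning on whether $z(k)$ lies in $\mathcal{G}(\epsilon)$ gives $\mathrm{E}[L(z(k+1))]-\mathrm{E}[L(z(k))]\le-b(\epsilon)\,p_k$. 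Telescoping from $j=0$ to $j=k-1$ and using $\mathrm{E}[L(z(k))]\ge L^\star$ yields $b(\epsilon)\sum_{j=0}^{k-1}p_j\le L(z(0))-L^\star$ for every $k$, and letting $k\to\infty$ gives $\sum_{j=0}^{\infty}p_j\le (L(z(0))-L^\star)/b(\epsilon)$. Combining with the first step, $\mathrm{E}[K_\epsilon]\le (L(z(0))-L^\star)/b(\epsilon)$, which is the claim; as a byproduct $K_\epsilon<\infty$ almost surely.

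I do not expect a serious obstacle here, since this is the same bookkeeping as in Theorem~4--1. The two points that need a little care are: (a) the identification $\{z(k)\in\mathcal{G}(\epsilon)\}=\{K_\epsilon>k\}$, which relies on $\mathcal{U}_\epsilon(B)$ being \emph{absorbing} and not merely on monotonicity of $L(z(k))$; and (b) verifying $b(\epsilon)>0$, which is the actual substance and is obtained exactly as in the auxiliary lemmas for Theorem~4: $\mathcal{F}$ is compact, each $L^i$ is continuous on it, and by Assumption~\ref{ass_on_f} no point of the compact set $\mathcal{F}\cap\{z:L(z)\ge\epsilon+L^\star\}$ (which contains the closure of $\mathcal{G}(\epsilon)$ and is disjoint from $B$) can be block-optimal, so $\phi>0$ there and, by Weierstrass, its infimum is a positive number $b(\epsilon)$.
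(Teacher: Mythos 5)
Your proof is correct and follows essentially the same route as the paper's: the same drift inequality $\psi(z)\le -b(\epsilon)$ on $\mathcal{G}(\epsilon)$, the same absorbing property of $\mathcal{U}_{\epsilon}(B)$, and the same telescoping of $\mathrm{E}\left[ L(z(k))\right]$ down to $L^\star$. The only (cosmetic) difference is that you convert the accumulated expected decrease into a bound on $\mathrm{E}\left[ K_{\epsilon}\right]$ via the tail-sum identity $\mathrm{E}\left[ K_{\epsilon}\right]=\sum_{k}\mathrm{Prob}(K_{\epsilon}>k)=\sum_{k}p_k$, whereas the paper introduces the auxiliary stopped sequence $\widetilde{z}(k)$ and counts the (exactly $K_{\epsilon}$) steps on which $\psi(\widetilde{z}(k))\le -b(\epsilon)$; both accountings yield the identical bound.
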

\begin{proof}
Let us define an auxiliary sequence $\widetilde{z}(k)$ as
$\widetilde{z}(k) =  z(k) $, if $z(k) \in \mathcal{G}(\epsilon) $, and
 $ z(k) = z^{\star}$, if $z(k) \in \mathcal{U}_{\epsilon}(B) $. Here $z^{\star}$ is a point in $B$.
 That is, $\widetilde{z}(k)$ is identical to $z(k)$ all the time while $z(k)$ is outside the set $\mathcal{U}_{\epsilon}(B)$ and $\widetilde{z}(k)$ becomes $z^{\star}$
 and remains equal to $z^{\star}$ once $z(k)$ enters $\mathcal{U}_{\epsilon}(B)$. (Remark that $z(k)$ never leaves the set $\mathcal{U}_{\epsilon}(B)$ once it enters it by construction of Algorithm P--AL--G.)

 Now, we have that:
 \begin{equation}
  \psi \left( \widetilde{z}(k) \right) = \left\{ \begin{array}{rl}
 \psi (z(k)) \leq -b(\epsilon) &\mbox{ if $z(k) \in \mathcal{G}(\epsilon) $} \\
 0 &\mbox{ if $z(k) \in \mathcal{U}_{\epsilon}(B)$}
       \end{array} . \right.
 \end{equation}
Taking the expectation of $\psi \left(z(k) \right)$, $k=0,...,t-1$ and summing up these expectations, and letting $t \rightarrow \infty$, we get:
\[
\mathrm{E} \left[  L \left( \widetilde{z}(\infty) \right) \right] - L(z(0))  = \sum_{k=0}^{\infty} \mathrm{E} \left[ \psi \left( \widetilde{z}(k+1) \right) \right]  - \mathrm{E} \left[ \psi \left( \widetilde{z}(k) \right) \right]\\
= \mathrm{E} \sum_{k=0}^{\infty} \psi \left( \widetilde{z}(k) \right) \leq - \mathrm{E} \left[ K_{\epsilon} \right]\,b(\epsilon)
\]
Thus, the claim in equation~\eqref{eqn-exp-num-iter} follows.
\end{proof}
%Theorem~4--2 is a corollary of Theorem~11. We now give the proof of Theorem~4--2.
%
We now prove Theorem~4--2. By Theorem 10, the expected value of $K_{\epsilon}$ is finite, and thus $K_{\epsilon}$ is finite a.s. This means that for all $\epsilon>0$, there exists random number $K_{\epsilon}$ (a.s. finite), such that $\widetilde{z}(k) = z^{\star}$, for all $k \geq K_{\epsilon}$, i.e., such that
$z(k) \in \mathcal{U}_{\epsilon}(B)$ for all $k \geq K_{\epsilon}$. The last statement is equivalent to Theorem~4--2.
%\end{proof}
%
%
%
%
\vspace{-3mm}
\section{Variants to \textbf{AL--G}: \textbf{AL--MG} (augmented Lagrangian multi neighbor gossiping) and \textbf{AL--BG} (augmented Lagrangian broadcast gossiping) algorithms}
\label{subsection_MNG}
This section introduces two variants to the \textbf{AL--G} algorithm, the \textbf{AL--MG} (augmented Lagrangian multi neighbor gossiping)
 and the \textbf{AL--BG} (augmented Lagrangian broadcast gossiping). Relying on the previous description and analysis of the \textbf{AL--G} algorithm, this section explains specificities of the \textbf{AL--MG} and \textbf{AL--BG} algorithms.
  Subsection~{V-A} details the \textbf{AL--MG}, and subsection~{V-B} details
  the \textbf{AL--BG} algorithm.
   Proofs of the convergence for P--AL--MG and P--AL--BG are in the Appendix.
\vspace{-3mm}
\subsection{\textbf{AL--MG} algorithm}
\label{subsect_MNG_alg}
The \textbf{AL--MG} algorithm is a variation of the \textbf{AL--G} algorithm. The algorithms \textbf{AL--G} and \textbf{AL--MG} are based on the same reformulation of~\eqref{eqn_original_primal_problem} (eqn.\eqref{eqn_primal_reformulated_with_y_s}), and they have the same dual variable update (that is, D--AL--G and D--AL--MG are the same.) We proceed
by detailing the difference between P--AL--MG and P--AL--G to solve~\eqref{eqn_solve_for_x_i_y_ij} (with fixed dual variables.) With
  the algorithm~P--AL--MG, each node has two independent Poisson clocks, $T_i^x$ and $T_i^y$.
  Update followed by a tick of $T_i^x$ is the same as with P--AL--G (see Algorithm~1, step~4.) If $T_i^y$
 ticks, then node $i$ transmits \emph{simultaneously} the variables $y_{ij}$, $j \in \Omega_i$, to all its neighbors ($y_{i,j_1}$ is transmitted to node $j_1$, $y_{i,j_2}$ is transmitted to node $j_2$, etc.) Due to link failures, the neighborhood nodes may or may not receive the transmitted information. Successfull transmissions are followed by updates of $y_{ji}$'s, according to
  eqn.~\eqref{eqn_solution_y_ij}.
 %All the nodes that do receive the transmitted information perform the update of the corresponding variables $y_{ji}$. P--AL--MG is summarized in
% Algorithm~4.
%%
%%
%\begin{algorithm}
%\caption{Algorithm with multiple-neighbor gossipping for solving~\eqref{eqn_solve_for_x_i_y_ij} (P--AL--MG)}
%\begin{algorithmic}[1]
%\REPEAT
%\STATE
%Wait for the tick of one of the clocks $T_j^x$, $T_j^y$, $j=1,...,N$.
%\STATE
%If clock $T_i^y$ ticks, node $i$ transmits to all of the neighbors $j \in \Omega_i$ current values of the corresponding variables $y_{ij}$.  Each neighbor node $j_l$ that successfully received $y_{i,j_l}$ performs the update of the corresponding variable $y_{j_l,i}$ according to the equation~\eqref{solve_wrt_y_ij}.
%
%\STATE If clock $T_i^x$ ticks, node $i$ updates the variable $x_i$ by solving~\eqref{eqn_solving_wrt_x_i-simplified}.
%\UNTIL a stopping criterion is met.
%\end{algorithmic}
%\label{algorithm_MNG}
%\end{algorithm}
Define also the virtual clock $T$
 that ticks whenever one of the clocks $T_i^x$, $T_i^y$, ticks. Accordingly, we define the $k$-time slots
  as $[\tau_{k-1},\tau_k)$, $k=1,2...$, $\tau_0=0$, and $\tau_k$ is the time of the $k$-th tick of $T$.
Overall \textbf{AL--MG} algorithm is the same as \textbf{AL--G} (see Algorithm~3,) except that, instead
of P--AL--G, nodes run P--AL--MG algorithms at each $t$. We prove
convergence of the P--AL--MG in the Appendix; for convergence of the overall \textbf{AL--MG} algorithm, see discussion at the beginning of section~V.
\vspace{-3mm}
\subsection{\textbf{AL--BG} algorithm: An algorithm for static networks}
\label{subsection_BG} We now present a simplified algorithm for the networks with reliable transmissions. This algorithm is based on the
reformulation of~\eqref{eqn_original_primal_problem} that eliminates the variables $y_{ij}$'s. That is, we start with the following
equivalent formulation of~\eqref{eqn_original_primal_problem}:
\begin{equation}
\begin{array}[+]{ll}
\mbox{minimize} & \sum_{i=1}^N f_i(x_i) \\
\mbox{subject to} & x_i \in \mathcal{X}_i,\,\,i=1,...,N,
\\ & x_i = x_j,\,\, \{i,j\} \in E
\end{array}
\label{eqn_reformulated_primal_no_y_s}
\end{equation}
We remark that~\eqref{eqn_reformulated_primal_no_y_s} is equivalent to~\eqref{eqn_original_primal_problem} because the
 supergraph is connected. After dualizing the constraints $x_i = x_j$, $(i,j) \in E$, the AL dual function
 $L_a(\cdot)$ and the Lagrangian $L(\cdot) $ become:
\begin{eqnarray*}
\begin{array}[+]{ll}
L_a \left(  \{ \lambda_{\{i,j\} } \} \right) \,= \,
\mbox{\,\,\,\,\,\,\,\,min} & L \left(  \{x_i\},\, \{    \lambda_{\{i,j\}}  \}  \right) \\
\,\,\,\,\,\,\,\,\,\,\,\,\,\,\,\,\,\,\,\,\,\,\,\,\,\,\,\,\,\,\,\,\,
\,\,\,\,\,\,\,\,\,\,\,\,\,\,\,\,\,
\mbox{subject to} & x_i \in \mathcal{X}_i,\, i=1,...,N
\end{array}
\end{eqnarray*}
%
%\begin{equation}
%L \left(  \lambda_{\{i,j\}}}   \right)   =  \min_{ x_i \in \mathcal{X}_i,\,i=1,...,N}  L \left(  \{x_i\},\, \{\lambda_{\{i,j\}}}\}  \right)
%\end{equation}
%where
\begin{equation}
\label{eqn-l-bg}
%\label{eqn_dual_problem_no_y_s}
L \left(  \{x_i\},\, \{\lambda_{\{i,j\}} \}  \right) = \sum_{i=1}^N f_i(x_i) + \sum_{\{i,j\} \in E,\,i<j} \lambda_{\{i,j\}}^{\top}\, \left( x_i - x_j \right)
 + \frac{1}{2} \rho \sum_{\{i,j\} \in E, \,i<j}\,\| x_i - x_j \|^2.
\end{equation}
In the sums
 $ \sum_{ \{i,j\} \in E} \lambda_{\{i,j\}}^{\top}\, \left( x_i - x_j \right)$ and
 $\sum_{\{i,j\} \in E}\,\| x_i - x_j \|^2$, the terms
 $\lambda_{ \{i,j\} }^{\top}\, \left( x_i - x_j \right)$ and $\| x_i - x_j \|^2$ are included once. (The summation is over the undirected edges $\{i,j\}$.) Also, terms $\lambda_{\{i,j\}}^{\top}\, \left( x_i - x_j \right)$ in the sum
$\sum_{ \{i,j\} \in E}
\lambda_{ \{i,j\} }^{\top}
\, \left( x_i - x_j \right)$ are arranged such that $i<j$, for all $\{i,j\} \in E$. The resulting dual optimization problem is the unconstrained maximization of $L_a(  \lambda_{ \{ i,j \} } )$.
%
%
%\subsection{Solving the dual}
%\label{subsect_BG_solving_dual}

\mypar{Solving the dual: D--AL--BG algorithm} We solve the dual~\eqref{eqn-l-bg} by the method of multipliers, which can be shown to have the following form:
\begin{equation}
\lambda_{\{i,j\}}(t+1) = \lambda_{\{i,j\}}(t) + \rho \, \mathrm{sign}(j-i)\, \left( x_i^{\star}(t) - x_j^{\star}(t) \right)
\label{eqn_BG_dual_var_update}
\end{equation}
\vspace{-2mm}
%where
\begin{eqnarray}
\label{eqn_solve_for_x_problem}
\begin{array}[+]{ll}
x^{\star}(t) = \left(x_1^{\star}(t), x_2^{*}(t),...,x_N^{*}(t) \right)\, \in \,
\mbox{arg min} & L \left(  \{x_i\},\, \{\lambda_{\{i,j\}} (t) \}  \right) \\
\,\,\,\,\,\,\,\,\,\,\,\,\,\,\,\,\,\,\,\,\,\,\,\,\,\,\,\,\,\,\,\,\,\,\,\,\,\,\,
\,\,\,\,\,\,\,\,\,\,\,\,\,\,\,\,\,\,\,\,\,\,\,\,\,\,\,\,\,\,\,\,\,\,\,\,\,\,\,\,\,\,\,\,\,\,\,\,\,
\mbox{subject to} & x_i \in \mathcal{X}_i,\, i=1,...,N
\end{array}.
\end{eqnarray}
We will explain in the next paragraph how the P--AL--BG algorithm solves~\eqref{eqn_solve_for_x_problem}
in a distributed, iterative way. With \textbf{AL--BG}, each node needs to maintain only one $m$-dimensional
 dual variable:
 $\overline{\lambda}_i := \sum_{ j \in \Omega_i} \mathrm{sign}(j-i)\,\lambda_{\{i,j\}}$.
  Also, define $\overline{x}_i:=\sum_{j \in \Omega_i} x_j$. The P--AL--G algorithm terminates after a finite number of inner iterations $k$, producing an inexact
  solution. Denote by $x_i^F$ (resp. $x_j^F$) the inexact solution of $x_i$ (resp. $x_j$, $j \in \Omega_i$), available at node $i$, after termination of P--AL--BG.
   We will see that $x_i^F = x_i^L$, $\forall i$; accordingly, after termination of P--AL--BG, node $i$ has available $\overline{x}_i^F:=\sum_{j \in \Omega_i} x_j^F$. Summing up equations~\eqref{eqn_BG_dual_var_update} for $\lambda_{\{i,j\}},\,j \in \Omega_i$, and taking into account the finite time termination of the P--AL--BG, we arrive at the following
     dual variable update at node $i$:
\begin{equation}
\overline{\lambda}_i(t+1)=\overline{\lambda}_i(t) + \rho \, \left( d_i\,x_i^{F}(t) - \overline{x}^{F}_i (t) \right),\,i=1,...,N.
\label{eqn_update_lambda_bar}
\end{equation}

\mypar{Solving for~\eqref{eqn_solve_for_x_problem}: P--AL--BG algorithm}
\label{subsubsect_BG_solving_primal}
We solve the problem~\eqref{eqn_solve_for_x_problem} by a randomized, block-coordinate P--AL--BG algorithm. After straightforward calculations, it can be shown that minimization of the function in~\eqref{eqn-l-bg} with respect to $x_i$ (while other coordinates are fixed) is equivalent to the following minimization:
 \begin{equation}
\begin{array}[+]{ll}
\mbox{minimize} & f_i(x_i) + \left(  \overline{\lambda}_i -  \rho \, \overline{x}_i \right)^{\top} x_i + \frac{1}{2} \,\rho \,d_i \|x_i\|^2 \\
\mbox{subject to} & x_i \in \mathcal{X}_i
\end{array}
\label{eqn_minimize_with_resp_x_i-simplif}
 \end{equation}
Similarly as with \textbf{AL--G}, we assume that the clock ticks at all nodes are governed by independent Poisson process $T_i$'s.
P--AL--BG is as follows. Whenever
clock $T_i$ ticks, node $i$ updates $x_i$ via eqn.~\eqref{eqn_minimize_with_resp_x_i-simplif} and broadcasts the updated $x_i$ to all
the neighbors in the network. Discrete random iterations $\{k\}$ of the P--AL--BG algorithm are defined as ticks of the virtual clock $T$ that ticks whenever one of $T_i$ ticks. The P--AL--BG algorithm produces $x_i^F$ and $\overline{x}_i^F$ at node $i$.
%
%
%\begin{algorithm}
%\caption{Algorithm with broadcast-gossiping for solving primal (P--AL--BG)}
%\begin{algorithmic}[1]
%\REPEAT
%\STATE Wait for the tick of one of the clocks $T_l$, $l=1,...,N$.
%\STATE The clock of one of the nodes, say $T_i$, ticks.
%\STATE Node $i$ performs the optimization~\eqref{eqn_minimize_with_resp_x_i-simplif} and computes the update $x_i$.
%\STATE Node $i$ broadcasts the updated $x_i$ \emph{successfully} to all the neighbors in the supergraph.
%\UNTIL a stopping criterion is met.
%\end{algorithmic}
%\label{alg_broadcast_algorithm}
%\end{algorithm}
%
%
Overall primal-dual \textbf{AL--BG} algorithm is similar to the \textbf{AL--G} algorithm (see Algorithm~3), except that,
at each $t$, nodes cooperatively run the P--AL--BG algorithm, instead of P--AL--G algorithm.
We prove convergence of P--AL--BG in the Appendix; for convergence of the overall primal-dual
\textbf{AL--BG} scheme, see discussion at the beginning of Section~V.
\vspace{-3mm}
\section{Simulation examples}
\label{section_simulations}
In this section, we consider two simulation examples,
 namely, $l_1$--regularized logistic regression for classification (subsection~VI-A,) and cooperative spectrum sensing for cognitive radio networks (subsection~VI-B.) Both examples corroborate the convergence of our algorithms \textbf{AL--G}, \textbf{AL--MG} on random networks, and \textbf{AL--BG} on static networks, and demonstrate tradeoffs that our
  algorithms show with respect to the existing literature. We compare the convergence speed of our and existing algorithms
 with respect to: 1) communication cost; and 2) computational cost, while the communication cost
 is dominant in networked systems supported by wireless communication. \textbf{AL--BG}
     outperforms existing algorithms (in~\cite{asu-new-jbg,asu-random,johansson,bazerque_lasso}\footnote{Reference~\cite{bazerque_lasso} focusses specifically on the Lasso problem; we compare with~\cite{bazerque_lasso} in subsection {VI}-B.}) on static networks
     in terms of communication cost, on both examples; at the same time, it has a larger computational cost.
     For the $l_1$-regularized logistic regression example and random networks,
     \textbf{AL--G} and \textbf{AL--MG} outperform existing algorithms (\cite{asu-new-jbg,asu-random}\footnote{Only references~\cite{asu-new-jbg,asu-random} consider random
      networks.}) in terms of communication cost, while having larger
      computational cost. For the cooperative spectrum sensing example and random networks, \textbf{AL--G} and \textbf{AL--MG} converge slower than existing algorithms~\cite{asu-new-jbg,asu-random}.
\vspace{-3mm}
%{\color{blue}
\subsection{$l_1$--regularized logistic regression for classification}
We consider distributed learning of a linear discriminant function. In particular,
 we consider the $l_1$--regularized logistic regression optimization problem (eqn.~(45) in~\cite{BoydADMoM}; see Subsections 7.1 and 10.2). We add private constraints and adapt the notation from~\cite{BoydADMoM} to fit our exposition.\footnote{Note that~\cite{BoydADMoM} studies only the parallel network architecture, with a fusion center, and it does not propose an algorithm to solve the $l_1$--regularized logistic regression problem on generic networks, the case that we address here.} The problem setup is as follows. Each node $i$, $i=1,...,N$,
 has $N_d$ data samples, $\left\{a_{ij},b_{ij}\right\}_{j=1}^{N_d}$,
 where $a_{ij} \in {\mathbb R}^m$ is a feature vector (data vector,) and $b_{ij}\in \{-1,+1\}$ is the
 class label of the feature vector $a_{ij}$. That is, when $b_{ij}=1$ (respectively, $-1$,) then the feature vector $a_{ij}$ belongs to the class ``1" (respectively, ``$-1$".) The goal is to learn the weight vector $w \in {\mathbb R}^m$, and
 the offset $v \in \mathbb R$, based on the available samples at all nodes, $\left\{a_{ij},b_{ij}\right\}_{j=1}^{N_d}$, $i=1,...,N$, so that $w$ is sparse, and the equality: $
  \mathrm{sign} \left( a_{ij}^\top w + v\right) = b_{ij},\,\,i=1,...,N,\,j=1,...,{N_d},
  $
  holds for the maximal possible number of data samples $\left\{a_{ij},b_{ij}\right\}_{j=1}^{N_d}$, $i=1,...,N$. One approach to choose $w$ and $v$ is via $l_1$--regularized logistic regression; that is, choose
$w^\star$ and $v^\star$ that solve the following optimization problem,~\cite{BoydADMoM}:
\begin{equation}
\label{eqn-logistic-reg-problem}
\begin{array}[+]{ll}
\mbox{minimize} & \sum_{i=1}^N \sum_{j=1}^{N_d} \log \left(  1+ \mathrm{exp} \left( -b_{ij}(a_{ij}^\top w + v)\right)\right)
+ \lambda \|w\|_1\\
\mbox{subject to} & w^\top w \leq k_i,\,\,i=1,...,N\\
                  & |v| \leq k_i^\prime,\,i=1,...,N
\end{array}.
%\label{eqn_original_primal_problem}
\end{equation}
The parameter $\lambda>0$ enforces the sparsity in $w$, \cite{BishopBook}.
The private constraints on $w$ and $v$ at node $i$ ($k_i$'s and $k_i^\prime$'s are positive) represent the prior
knowledge available at node $i$ (see \cite{BoydBook}, Chapter 7.) Problem~\eqref{eqn-logistic-reg-problem} clearly
 fits our generic framework in~(1) and has a vector optimization variable, a non smooth objective function, and quadratic private constraints. Alternatives to~\eqref{eqn-logistic-reg-problem} to learn $w$ and $v$ include support
 vector machines and boosting,~\cite{BishopBook, BoydADMoM}. % and computationally challenging local problems in eqns.~(11)~and~(32).
%ver there is at least
%one $b_{ij}$, $j=1,...,N_d$, such that $b_{ij}=1$, and one $b_{ij}$ such that $b_{ij}=-1$.

 \mypar{Simulation setup} We consider a supergraph with $N=20$ nodes and $|E|=37$ undirected edges ($74$ arcs). Nodes are uniformly distributed on a unit square
and pairs of nodes with distance smaller than a radius $r$
are connected by an edge. For networks with link failures, the link failures of different arcs at the same time slot are
independent and the failures of the same arc at different time slots
are independent also. Link failure probabilities $\pi_{ij}$
are generated as follows:
$
%\label{eqn_link_failures}
\pi_{ij} = k \,  \frac{ \delta_{ij}^2 } { r^2  }, \,\,
\delta_{ij}<r,
$ {where $k=0.5.$ Each node has $N_d=5$ data samples.
 Each feature vector $a_{ij} \in {\mathbb R}^m$, $m=20$, and the ``true" vector $w_{\mathrm{true}}$ have approximately $60\%$
  zero entries. Nonzero entries of $a_{ij}$ and $w_{\mathrm{true}}$, and the offset $v_{\mathrm{true}}$
  are generated independently, from the standard normal distribution. Class labels $b_{ij}$
    are generated by: $b_{ij} = \mathrm{sign}  \left( a_{ij}^\top w_{\mathrm{true}} + v_{\mathrm{true}} + \epsilon_{ij}\right)$, where $\epsilon_{ij}$ comes from the normal distribution with zero mean and variance $0.1$.
    The penalty parameter $\lambda$ is set to be $0.5 \cdot \lambda_{\mathrm{max}}$, where
    $\lambda_{\mathrm{max}} $ is the maximal value of $\lambda$ above which the solution
    to \eqref{eqn-logistic-reg-problem} is $w^\star=0$ (see (\cite{BoydADMoM}, subsection 10.2) how to find $\lambda_{\mathrm{max}}$.)
     We set $k_i$ and $k_i^\prime$ as follows. We solve the unconstrained version of \eqref{eqn-logistic-reg-problem} via
      the centralized subgradient algorithm; we denote the corresponding solution by $w^\bullet$ and $v^\bullet$.
       We set $k_i = (1+r_i) \cdot \|w^\bullet\|^2$, $k_i^\prime = (1+r_i^\prime) \cdot |v^\bullet|$,
        where $r_i$ and $r_i^\prime$ are drawn from the uniform distribution on $[0,1]$.
        Thus, the solution to problem \eqref{eqn-logistic-reg-problem}
         is in the interior of the constraint set. (Similar numerical results to the ones presented are obtained when
          the solution is at the boundary.) To
    update $x_i$ with P--AL--G and P--AL--MG (eqn.~\eqref{eqn_solving_wrt_x_i-simplified}), we solve~\eqref{eqn_solving_wrt_x_i-simplified}
     via the projected subgradient algorithm.

\mypar{Algorithms that we compare with} In the first set of experiments, we consider
\textbf{AL--BG} for (static) networks; in the second set of experiments, we test \textbf{AL--G} and \textbf{AL--MG} on networks with link failures. We compare our algorithms with the ones proposed in~\cite{asu-new-jbg,nedic_T-AC,nedic_novo,asu-random}\footnote{We simulate the algorithms in~\cite{asu-new-jbg,nedic_T-AC,nedic_novo,asu-random} with symmetric link failures.} and in~\cite{johansson}.
References~\cite{asu-new-jbg, nedic_T-AC,nedic_novo,asu-random} propose a primal projected subgradient algorithm, here refer to as PS (Primal
Subgradient.) PS, as an intermediate step,
computes weighted average of the optimal point estimates across node $i$'s neighborhood. Averaging weights have not been recommended
in~\cite{asu-new-jbg,nedic_T-AC,nedic_novo,asu-random}; we use the standard time-varying Metropolis weights, see~\cite{BoydFusion}, eqn.~\eqref{eqn_solving_wrt_x_i-simplified}. Reference~\cite{johansson} proposes an incremental primal subgradient
algorithm, here referred to as MCS (Markov chain subgradient.) With MCS, the order of incremental
 updates is guided by a Markov chain,~\cite{johansson}.\footnote{Convergence for MCS has been
 proved only with the projection onto a public constraint set, but we simulate it here
 with the straightforward generalization of the projection onto private constraint sets; MCS showed convergence for our example in
 the private constraints case also.} We simulate MCS and PS with fixed subgradient step size rather than the diminishing step size, as the former yields faster convergence.

We compare the algorithms based on two criteria. The first is the amount of inter-neighbor communication that the algorithms require to meet a certain accuracy. We count the total number of radio transmissions (counting
 both successful and unsuccessful transmissions.) The second is the total number of floating point operations (at all nodes.)
  In networked systems supported by wireless communication (e.g., WSNs,) the dominant cost (e.g., power consumption) is induced by communication. Total number of floating point operations depends on the algorithm implementation, but the results to be presented give a good estimate of the algorithms' computational cost. It may be possible to reduce the computational cost of
  \textbf{AL--G}, \textbf{AL--MG}, and \textbf{AL--BG} by a more computationally efficient solutions
  to problems \eqref{eqn_solving_wrt_x_i-simplified} and~(32) than (here adopted) projected subgradient method.
 %This simulation example shows that our algorithms (\textbf{AL--BG}, \textbf{AL--G} and \textbf{AL--MG}) reduce communication cost, having at the same time tolerably greater, or even smaller, computational cost.

Denote by $f^\star$ the optimal value of \eqref{eqn-logistic-reg-problem}. We compare the algorithms in terms of the following metric:
\[
\textbf{err}_f = \frac{1}{N} \sum_{i=1}^N \left( f(x_i)-f^\star \right),
\]
where $x_i$ is the estimate of the optimal solution available at
node $i$ at a certain time.%, after certain number of transmissions or floating point operations.

With our \textbf{AL--G, AL-MG}, and \textbf{AL--BG} algorithms, the simulations to be presented use
 an increasing sequence of AL penalty parameters (see the end of Section~{IV},)
 which, after some experimentation, we set to the following values: $\rho_{t}=t^{A_{\rho}}+B_{\rho}$, $t=0,1,...$,
 with $A_{\rho}=1.3$, and $B_{\rho}=1$. We also
 implemented the algorithms with different and increasing $\rho$'s
  assigned to each dual variable, with the scheme for adjusting $\rho$'s explained at the end of Section~{IV}, with $\kappa_{\lambda_{(i,j)}}=\kappa_{\mu_{(i,j)}}=0.3$, and $\sigma_{\lambda_{(i,j)}}=\sigma_{\mu_{(i,j)}}
    = 1.2$. The latter choice also showed convergence of \textbf{AL--G, AL-MG}, and \textbf{AL--BG},
    but the former yielded faster convergence.
    Our simulation experience shows that
     the convergence speed of \textbf{AL--G, AL-MG}, and \textbf{AL--BG}
      depend on the choice of $\rho_t$, but the optimal tuning of $\rho_t$ is left for future studies. With PS and MCS, and a
 fixed step size, the estimates $f(x_i)$ converge only to a neighborhood of $f^\star$. There is a tradeoff between the limiting error $\textbf{err}_f(\infty)$ and the rate of convergence with respect to the stepsize $\alpha$: larger $\alpha$ leads to faster convergence and larger $\textbf{err}_f(\infty)$. We notice by simulation that \textbf{AL--G, AL--MG}, and \textbf{AL--BG}
  converge to a plateau neighborhood of $f^{\star}$; after that,
  they improve slowly; call the error that corresponds to this plateau $\textbf{err}_f(ss)$. To make
  the comparison fair or in favor of PS and MCS, we set $\alpha$ for the PS and MCS algorithms such that
  the $\textbf{err}_f(\infty)$ for PS and MCS is equal (or greater) than the $\textbf{err}(ss)$ attained by \textbf{AL--G, AL--MG}, and \textbf{AL--BG}.
  % (See, e.g., Figure~2)}}

\mypar{Results: Static network} Figure 2 (top left) plots
$\textbf{err}_f$ versus the number of ($m=20$-dimensional vector) transmissions (cumulatively at all nodes.)
We can see that \textbf{AL--BG} outperforms PS and MCS by one to two
 orders of magnitude. \textbf{AL--BG} needs
 about $0.3 \cdot 10^5$ transmissions to reduce $\textbf{err}_f$ below $0.001$, while MCS and PS need, respectively, about $4 \cdot 10^5$ and $18 \cdot 10^5$ transmissions
 for the same precision. With respect to the number of floating point
  operations (Figure 2, top right,) \textbf{AL--BG} needs more
  operations than MCS and PS; $45 \cdot 10^8$ for \textbf{AL--BG}
  versus $13 \cdot 10^8$ for PS, and $2 \cdot 10^8$ for MCS. Thus,
  with respect to MCS, \textbf{AL--BG} reduces communication at a cost of additional
  computation. Note that with \textbf{AL--BG}, MCS, and PS, due to private constraints, node $i$'s
  estimate $x_i$ may not be feasible at certain time slots;
 in this numerical example, \textbf{AL--BG}, MCS, and PS all produced feasible solutions
 at any time slot, at all nodes. A drawback of MCS in certain applications,
  with respect to PS and \textbf{AL--BG}, can be the \emph{delay time} that MCS needs for
  the ``token" to be passed from node to node as MCS evolves, see~\cite{johansson}.

\mypar{Results: Random network} Figure 2 (bottom left) plots $\textbf{err}_f$ versus
the total number of transmissions. \textbf{AL--MG} and \textbf{AL--G} outperform PS. To decrease $\textbf{err}_f$ below $5 \cdot 10^{-4}$,
\textbf{AL--MG} and \textbf{AL--G} require about $1.2 \cdot 10^6$
 transmissions, and \textbf{AL--G} $1.5 \cdot 10^6$ transmissions; PS requires about
  $3.7 \cdot 10^6$ transmissions to achieve the same precision.
  Figure 2 (bottom right) plots $\textbf{err}_f$
   plots versus the total number of floating point operations.
    PS requires less computation
    than \textbf{AL--G} and \textbf{AL--MG}.
    To decrease $\textbf{err}_f$ below $5 \cdot 10^{-4}$,
\textbf{AL--MG} and \textbf{AL--G}
  require about $69 \cdot 10^{9}$ transmissions; PS requires about
  $2.8 \cdot 10^9$ transmissions for same precision. With each of the algorithms \textbf{AL--G}, \textbf{AL--MG}, and PS, each node $i$'s estimate $x_i$ was feasible along time slots.%}}%; hence, \textbf{AL--G}, \textbf{AL--MG}, and PS all produced feasible solutions at all nodes.}}
\vspace{-4mm}
\begin{figure}[thpb]
      \centering
      \includegraphics[height=1.62in,width=2.31in]{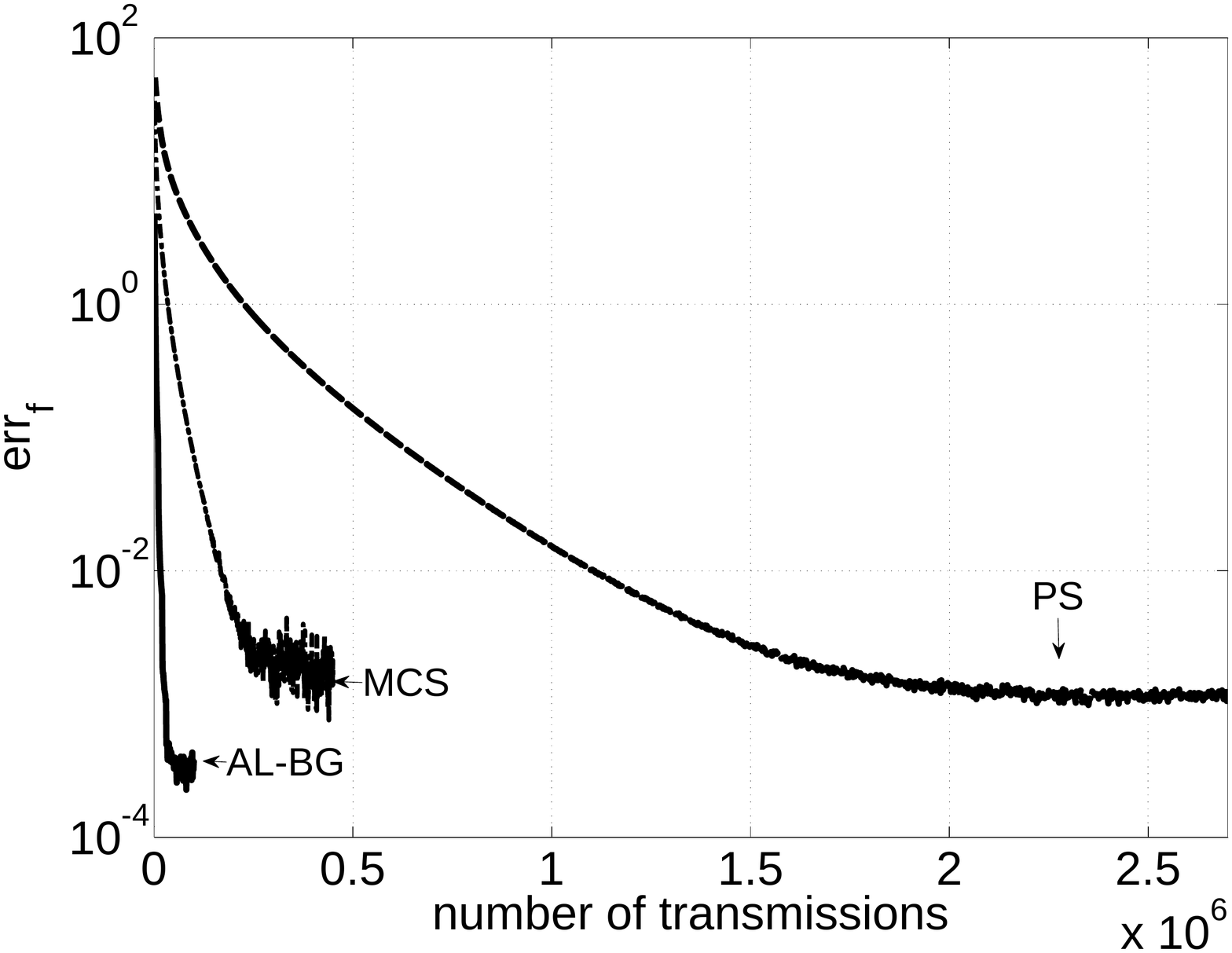}
      \includegraphics[height=1.62in,width=2.31in]{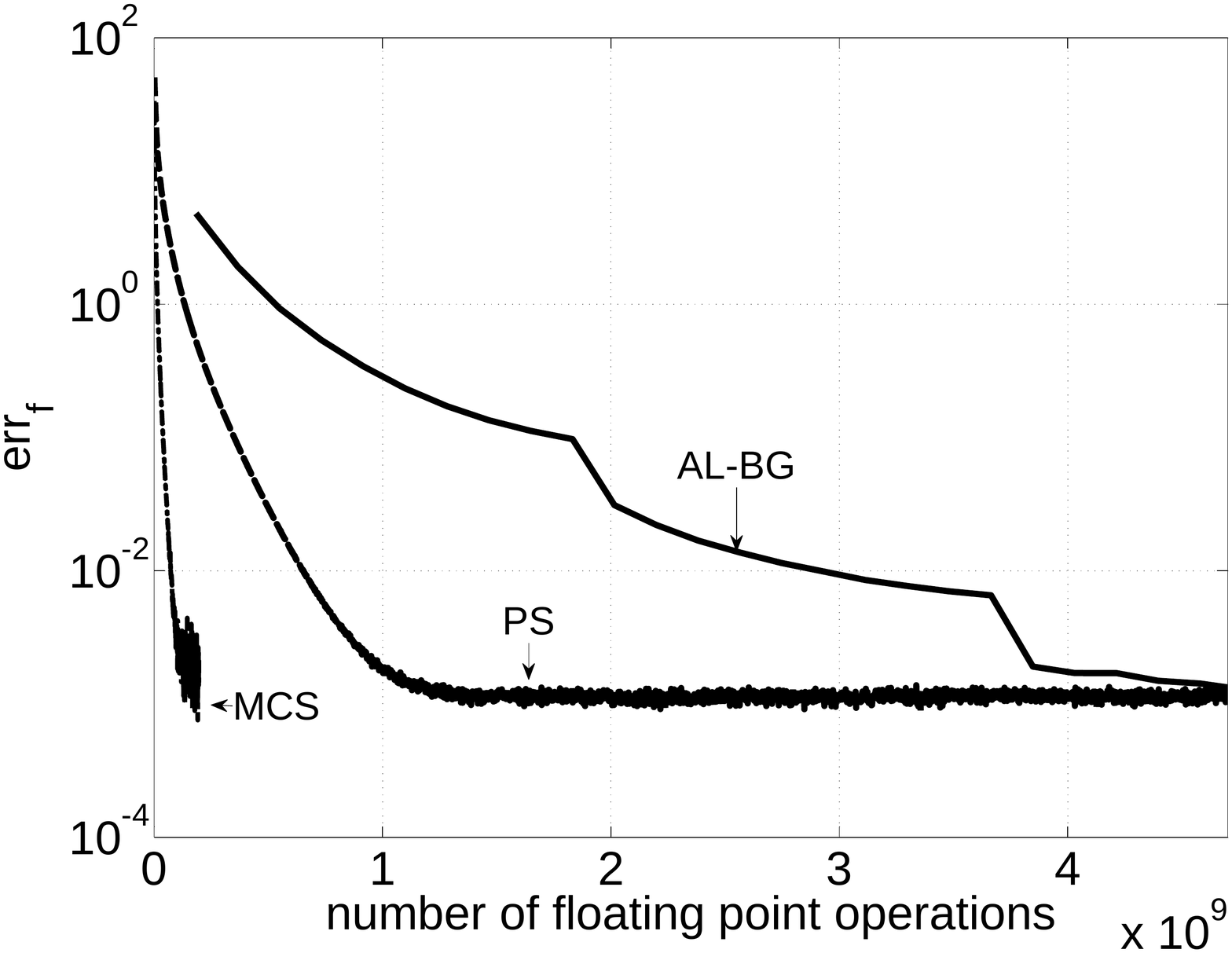}
      \includegraphics[height=1.62in,width=2.31in]{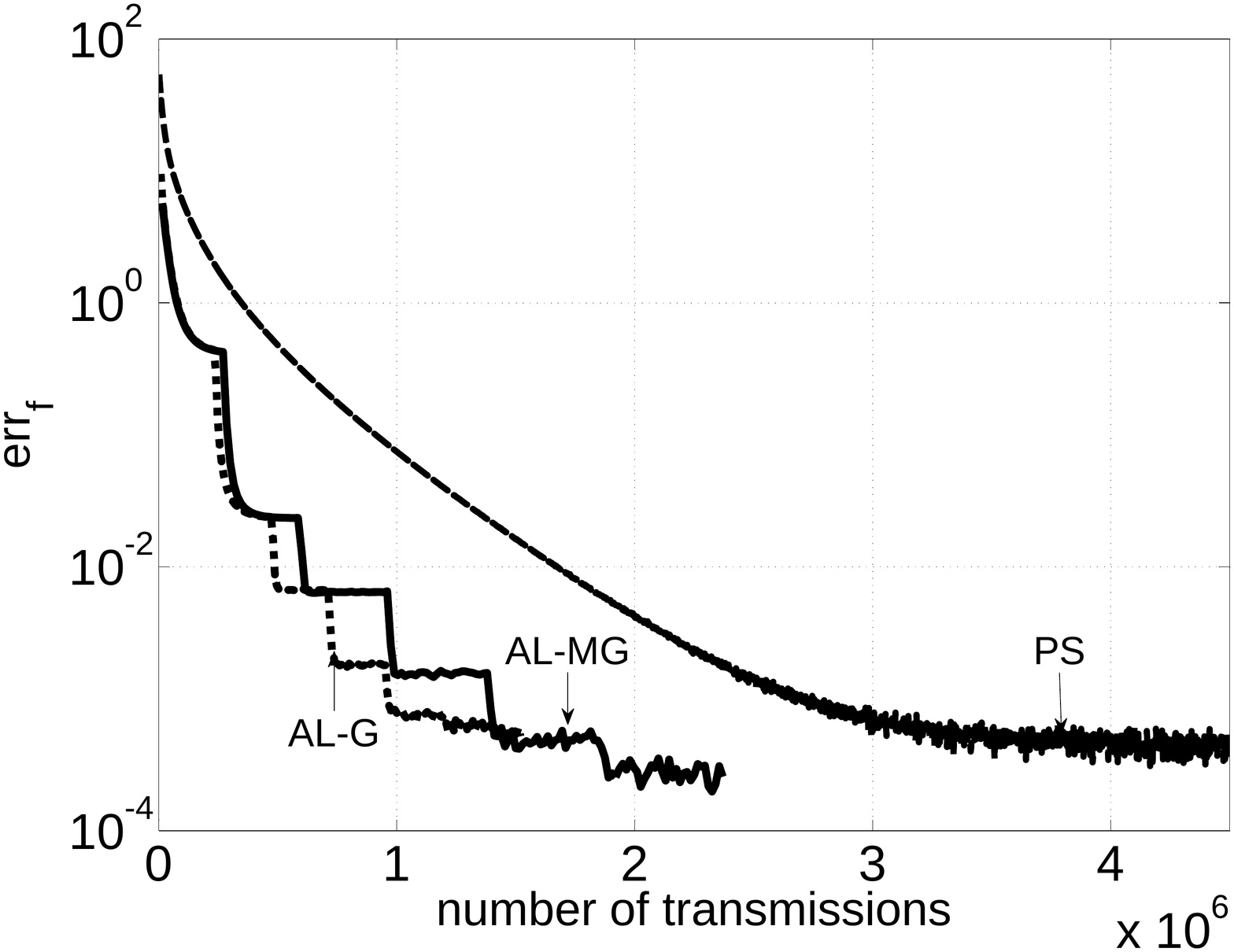}
      \includegraphics[height=1.62in,width=2.31in]{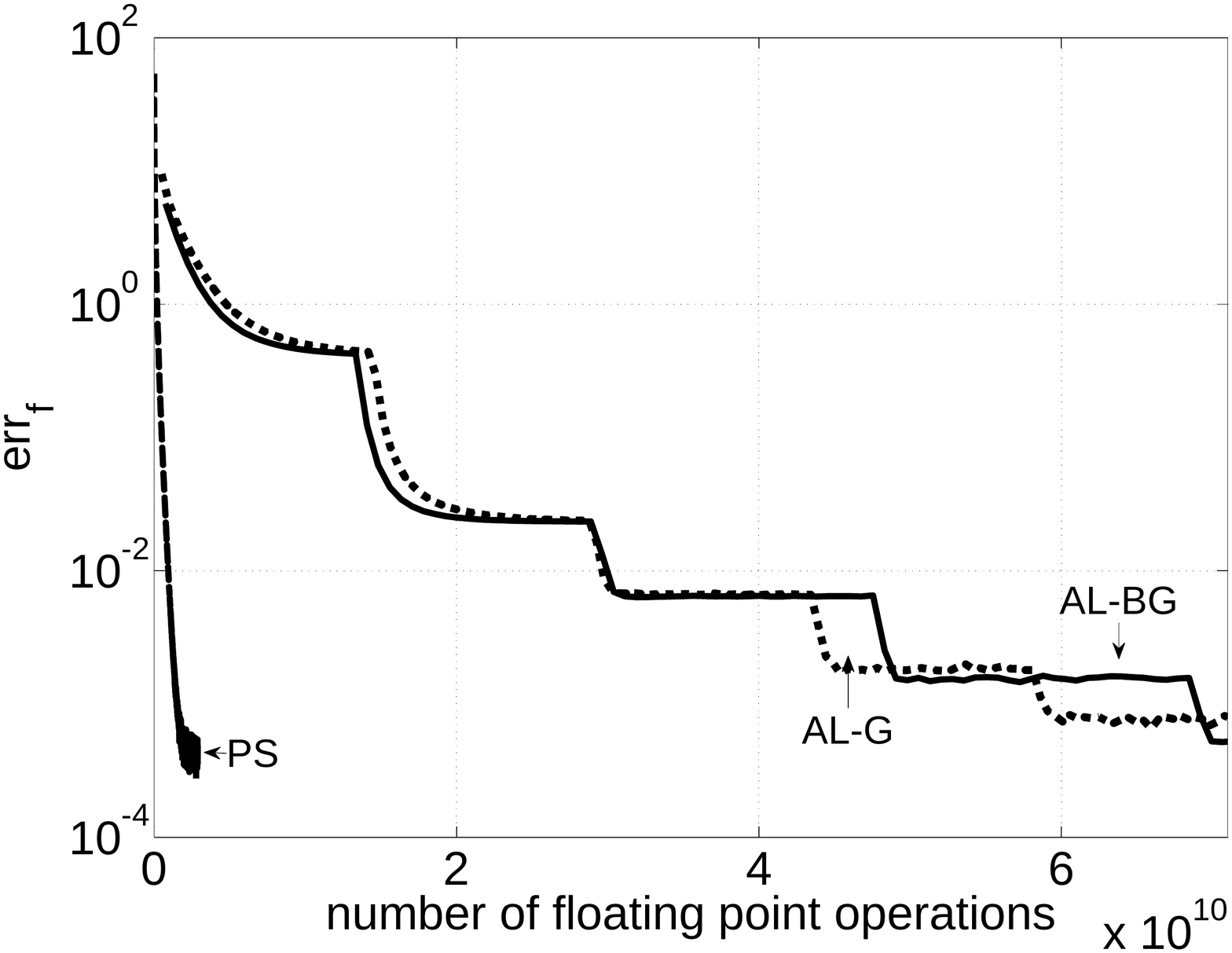}
      \label{slika-regression}
      \caption{Performance of \textbf{AL--BG}, MCS, and PS on a static network (top figures,) and the \textbf{AL--G}, \textbf{AL--MG} and PS algorithms on a random network
       (bottom figures.) Left: total number of transmissions; Right: total number of floating point operations.}
\end{figure}
\vspace{-4mm}

\vspace{-7mm}
\subsection{Cooperative spectrum sensing for cognitive radio networks}
\label{Cognitive_Radio}
We now consider cooperative spectrum sensing for cognitive radio networks.
Cognitive radios are an emerging technology for improving the efficiency of usage of the radio spectrum.
%Recent extensive measurement campaigns have shown that primary (licensed) spectrum users under-utilize certain bandwidths.
%This opens the possibility for development of cognitive radios; cognitive radios sense the spectrum and access it opportunistically. That is, cognitive radios access the spectrum at times when primary users are not active.
(For a tutorial on cognitive radios see, e.g.,~\cite{CR_tutorial}.) We focus here on the cooperative spectrum sensing approach that has been studied in~\cite{bazerque_lasso, bazerque_sensing}. Suppose that $N_r$
 cognitive radios, located at $x_r$ positions in 2D space, cooperate to determine: 1) the spatial locations; and
 2) the power spectrum density (PSD) of primary users. Primary users can be located on $N_s$ potential locations, $x_s$, on $\sqrt{N_s} \times \sqrt{N_s}$ square grid (See Figure 3, top, in~\cite{bazerque_sensing}.)
 For brevity, we omit the details of the problem setup; we refer to reference~\cite{bazerque_lasso},
 subsection~II-A,
 for the problem setup, and section~II (eqn.~\eqref{eqn_primal_reformulated_with_y_s}) in the same reference, for the
 Lasso optimization problem of estimating the locations and the PSD of primary users.
 This (unconstrained) optimization problem in eqn.~\eqref{eqn_primal_reformulated_with_y_s} in~\cite{bazerque_lasso}
  fits the generic framework in~eqn.~\eqref{eqn_original_primal_problem}; thus, our algorithms \textbf{AL--G},
   \textbf{AL--MG} and \textbf{AL--BG} apply to solve the problem in eqn.~\eqref{eqn_primal_reformulated_with_y_s} in~\cite{bazerque_lasso}. Throughout,
  we use the same terminology and notation as in~\cite{bazerque_lasso}.  We now detail the simulation parameters. The number of potential sources is
 $N_s=25$; they are distributed on a regular $5 \times 5$ grid over the square surface of $4\mathrm{km}^2$.
 Channel gains $\gamma_{sr}$ are modeled as $\gamma_{sr} = \min \left\{ 1, \,\,\frac{A}{\|x_s-x_r\|^a} \right\}$, with $A=200$ [meters] and $a$=3. The number of basis rectangles is $N_b=6$,
 and the number of frequencies at which cognitive radios sample PSD is $N_f=6$.
 There are 3 active sources; each source transmits at 2 out of $N_b=6$ possible frequency bands.
  After some experimentation, we set the Lasso parameter $\lambda$ (see eqn.~\eqref{eqn_primal_reformulated_with_y_s}
  in~\cite{bazerque_lasso}) to $\lambda=1$; for a distributed algorithm to optimally~set~$\lambda$,~see~\cite{bazerque_lasso}. We consider the supergraph with $N_r=20$ nodes (cognitive radios) and
$|E|=46$ undirected edges ($92$ arcs.) Nodes are uniformly distributed on a unit 2km$\times$2km square
and the pairs of nodes with distance smaller than $r=$750m
are connected.

For static networks, we compare \textbf{AL--BG} (our algorithm) with MCS, PS, and an algorithm in~\cite{bazerque_lasso}. Reference~\cite{bazerque_lasso} proposes three (variants of AD-MoM type algorithms, mutually differing in: 1) the total number of primal and dual variables maintained by each node (cognitive radio); 2) the method by which nodes solve local optimizations for primal variable update (These problems are similar to~\eqref{eqn_minimize_with_resp_x_i-simplif}.) We compare \textbf{AL--BG} with the DCD-Lasso variant, because it has the same number of primal and dual variables as \textbf{AL--BG} and a smaller computational cost than the alternative DQP-Lasso variant. With \textbf{AL--BG}, we use an increasing sequence of AL penalty parameters, $\rho_{t}=K_{\rho}\,A_{\rho}^t+C_{\rho}$, $t=0,1,...$, with $K_{\rho}=1$, $A_{\rho}=1.15$ and $C_{\rho}=3$. With DCD-Lasso, we used fixed
$\rho=\rho_t$, as in~\cite{bazerque_lasso,bazerque_sensing}.\footnote{It may be possible
 to improve on the speed of DCD-Lasso by selecting appropriate time varying $\rho=\rho_t$; this is outside of our paper's scope.}
We solve the local problems in \textbf{AL--BG} (eqn.~\eqref{eqn_minimize_with_resp_x_i-simplif}), \textbf{AL--G} and \textbf{AL--MG} (eqn.~\eqref{eqn_solving_wrt_x_i-simplified},) by an efficient block coordinate method in~\cite{bazerque_lasso} (see eqn.~(13)~in~\cite{bazerque_lasso}.)
%. This method has small computational cost.
%
%
%
For the networks with link failures, we have compared our \textbf{AL--G} and \textbf{AL--MG}
 algorithms with PS (in~\cite{asu-new-jbg,nedic_T-AC,nedic_novo,asu-random}.) We briefly comment on the results. Both \textbf{AL--G}
   and \textbf{AL--MG} converge to a solution, in the presence of
   link failures as in~VI-A; they converge
    slower than the PS algorithm, both in terms of communication
     and computational cost.

\mypar{Results for static network}
Figure 3 (left) plots $\textbf{err}_f$ for PS, MCS, DCD-Lasso, and \textbf{AL--BG} versus the number of transmissions (at all nodes.) \textbf{AL--BG} shows improvement over the other algorithms. To achieve the precision of $\textbf{err}_f \leq 0.044$,
 \textbf{AL--BG} requires about $5 \cdot 10^4$ transmissions; MCS $20 \cdot 10^4$
  transmissions; DCD-Lasso $25 \cdot 10^4$ transmissions; PS $50 \cdot 10^4$ transmissions. Limiting error for PS is $0.027$ (not visible in the plot.)
   Note also that DCD-Lasso and PS saturate at a larger error than \textbf{AL--BG} and MCS.
  \vspace{-2mm}
   \begin{figure}[thpb]
      \centering
      \includegraphics[height=1.6in,width=2.3in]{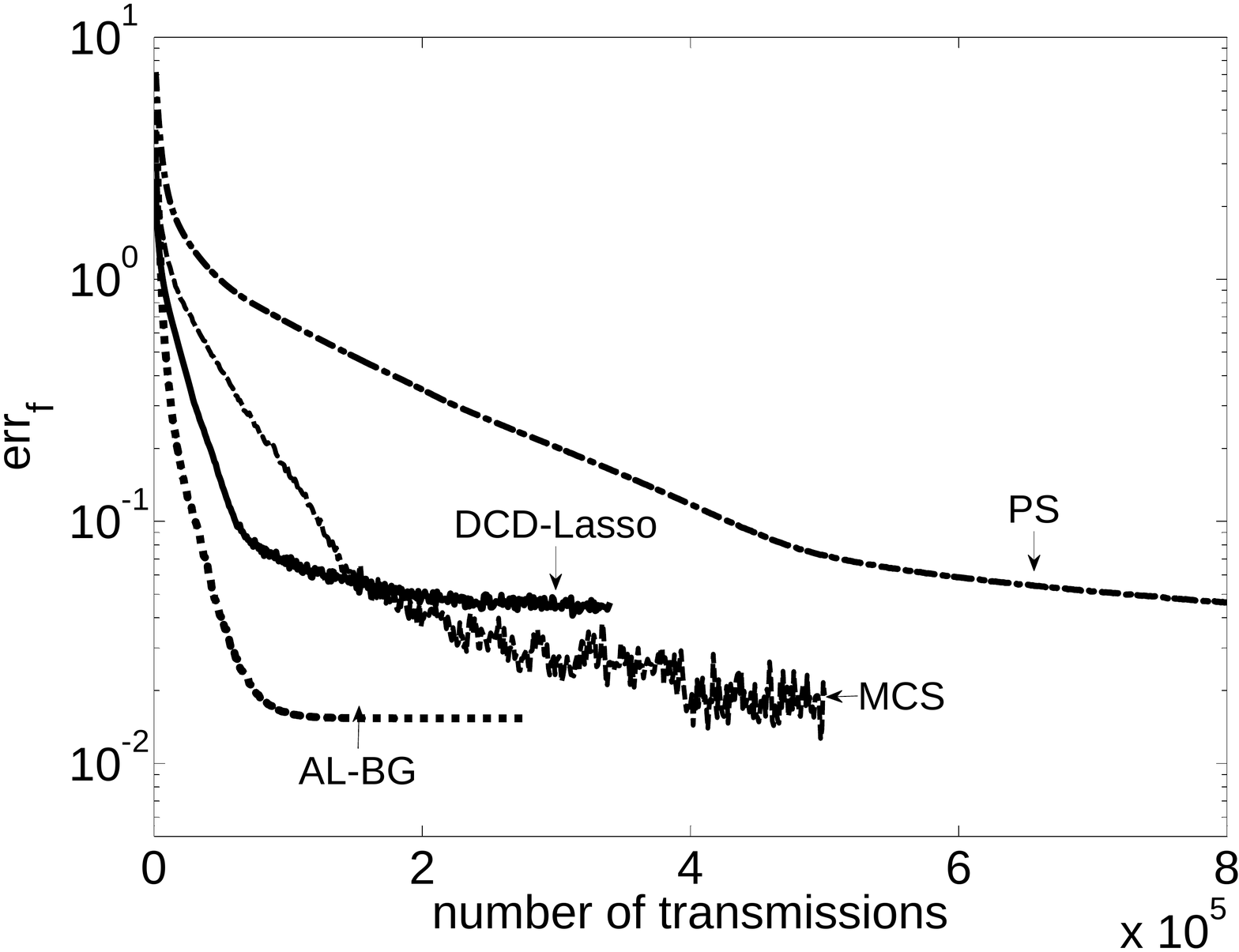}
      \includegraphics[height=1.6in,width=2.3in]{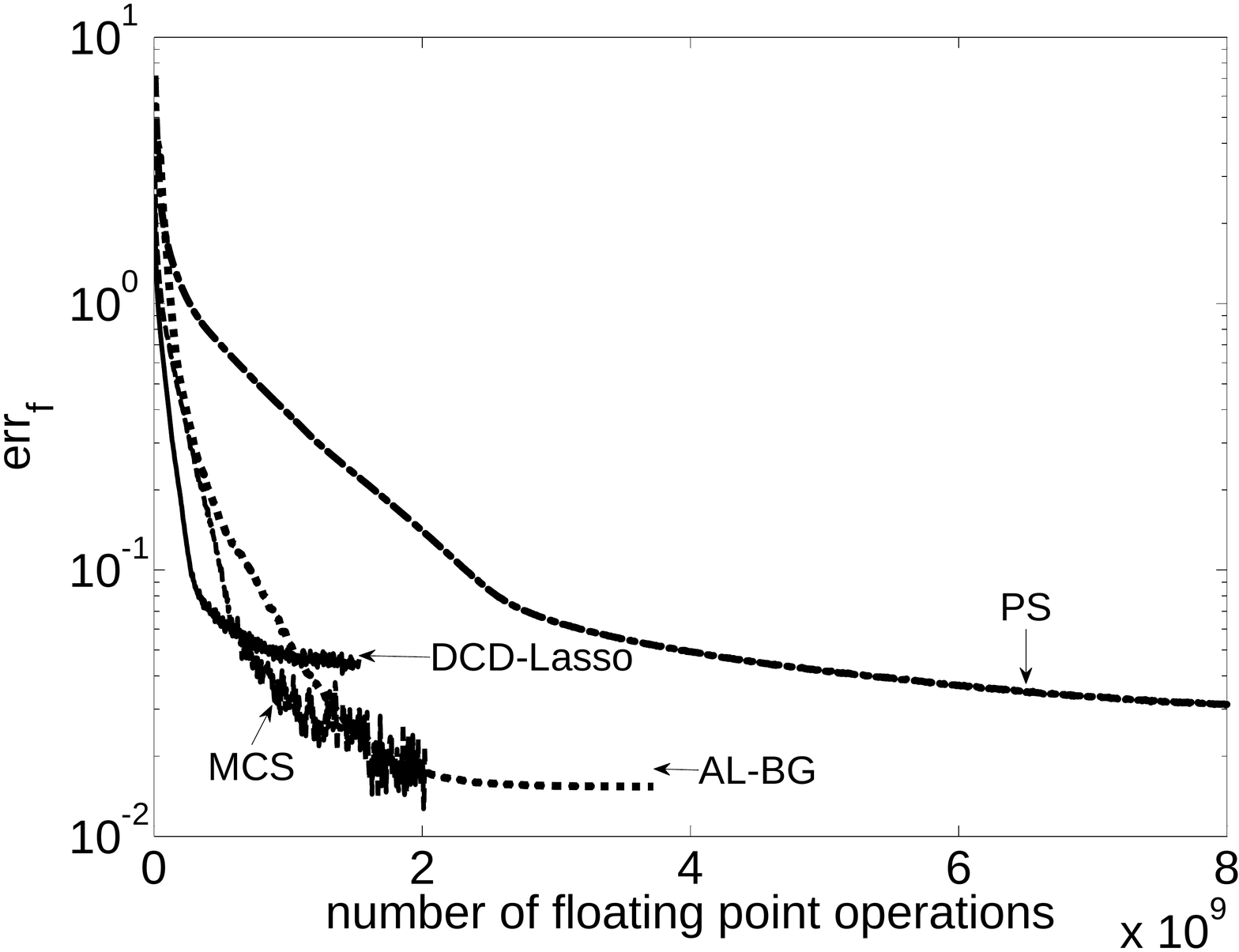}
      \caption{Performance of \textbf{AL--BG}, DCD-Lasso, PS and MCS algorithms on static CR network. Left: total number of transmissions (cumulatively, at all nodes). Right: total number of floating point operations (cumulatively, at all nodes.)}.
   \end{figure}
   \vspace{-0mm}
   Figure 3 (right) plots the $\textbf{err}_f$ for the PS, MCS, DCD-Lasso, and \textbf{AL--BG} algorithms
    versus the total number of floating point operations. \textbf{AL--BG}, MCS and DCD-Lasso show similar performance, while PS is slower.
\vspace{-3mm}
\section{Conclusion}
We studied cooperative optimization in networked systems, where each node obtains
 an optimal (scalar or vector) parameter of common interest, $x=x^\star$.
 Quantity $x^\star$ is a solution to the optimization problem where the objective
 is the sum of private convex objectives    at each node, and each node has a private
 convex constraint on $x$. Nodes utilize gossip to communicate through
a generic connected network with failing links. To solve this network problem, we proposed a novel distributed,
decentralized algorithm, the \textbf{AL--G} algorithm. \textbf{AL--G}
 handles a very general optimization problem with private costs, private constraints, random networks,
  asymmetric link failures, and  gossip communication.

This contrasts with existing augmented Lagrangian primal-dual methods
that handle only static networks and synchronous communication, while, as mentioned,
 the \textbf{AL–-G} algorithm handles
random networks and uses gossip communication.
In distinction with existing primal subgradient algorithms
 that essentially handle only symmetric link failures,
 \textbf{AL--G} handles asymmetric link failures.

\textbf{AL--G} updates the dual variables synchronously via a standard
 method of multipliers, and it updates the primal variables
 via a novel algorithm with gossip communication, the P--AL--G algorithm.
 P--AL--G is a nonlinear Gauss-Seidel type algorithm with
  random order of minimizations. Nonlinear Gauss-Seidel
   was previously shown to converge only
    under the \emph{cyclic} or the \emph{essentially cyclic} rules,~\cite{Tseng_new,BertsekasOptimization}; we prove convergence of    P--AL--G, which has a \emph{random} minimization order. Moreover, our
    proof is different from standard proofs for nonlinear Gauss-Seidel, as
    it uses as main argument the expected decrease in the objective function after
    one Gauss-Seidel step. We studied
    and proved convergence of  two variants of \textbf{AL--G}, namely, \textbf{AL--MG} and \textbf{AL--BG}. An interesting
     future research direction is to develop a fully asynchronous primal-dual algorithm that updates
     both the dual and primal variables asynchronously.

The \textbf{AL--G} algorithm is a generic tool to solve a wide range of problems
in networked systems; two simulation examples, $l_1$--regularized logistic regression for classification, and cooperative spectrum sensing for cognitive radios, demonstrated the applicability and effectiveness of \textbf{AL--G} in applications.
 \vspace{-3mm}
\appendix
\section{Appendix}
\label{AppendixSection}
\vspace{-3mm}
\textbf{Proof of Lemma~8.} We first need a standard result from topology (proof omitted for brevity.)
\begin{lemma}
Let $\mathcal{X}$ and $\mathcal{Y}$ be topological spaces, where $\mathcal{Y}$
 is compact. Suppose the function: $\kappa: \mathcal{X} \times \mathcal{Y} \rightarrow {\mathbb R}$
  is continuous (with respect to the product topology on $\mathcal{X} \times \mathcal{Y}$
   and the usual topology on $\mathbb R$; $\times$ denotes Cartesian product.) Then, the function $\gamma: \mathcal{X}
    \rightarrow {\mathbb R}$, $\gamma(a):=\inf\{\kappa(a,b):\,b \in \mathcal{Y}\}$~is~continuous.
\end{lemma}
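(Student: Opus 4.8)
The plan is to prove that $\gamma$ is continuous at an arbitrary point $a_0\in\mathcal{X}$ by showing separately that $\gamma$ is upper semicontinuous and lower semicontinuous there; since continuity of a real-valued function is equivalent to the conjunction of the two, this will suffice. First I would record the preliminary observation that, for each fixed $a$, the map $b\mapsto\kappa(a,b)$ is continuous on the compact space $\mathcal{Y}$, so by the Weierstrass theorem the infimum defining $\gamma(a)$ is attained and $\gamma(a)\in{\mathbb R}$; in particular $\gamma$ is genuinely real-valued and there is some $b_0\in\mathcal{Y}$ with $\gamma(a_0)=\kappa(a_0,b_0)$.

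For upper semicontinuity I would fix $\epsilon>0$ and the minimizer $b_0$ above. By continuity of $\kappa$ at $(a_0,b_0)$ and the definition of the product topology, there is a neighborhood $U$ of $a_0$ in $\mathcal{X}$ (the $\mathcal{X}$-factor of a basic box neighborhood of $(a_0,b_0)$) with $\kappa(a,b_0)<\kappa(a_0,b_0)+\epsilon=\gamma(a_0)+\epsilon$ for all $a\in U$. Since $\gamma(a)\le\kappa(a,b_0)$ by definition of the infimum, one gets $\gamma(a)<\gamma(a_0)+\epsilon$ on $U$, which is upper semicontinuity. This direction does not even use compactness, only the (near-)attainment of $\gamma(a_0)$.

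For lower semicontinuity, which is where compactness of $\mathcal{Y}$ is genuinely needed, I would fix $\epsilon>0$ and argue as follows. For each $b\in\mathcal{Y}$ we have $\kappa(a_0,b)\ge\gamma(a_0)>\gamma(a_0)-\epsilon$, so $\kappa(a_0,b)$ lies in the open ray $(\gamma(a_0)-\epsilon,\infty)$; by continuity of $\kappa$ and the product topology there is a basic box neighborhood $U_b\times V_b$ of $(a_0,b)$ on which $\kappa(a,b')>\gamma(a_0)-\epsilon$ for all $(a,b')\in U_b\times V_b$. The family $\{V_b\}_{b\in\mathcal{Y}}$ is an open cover of the compact space $\mathcal{Y}$, so there are $b_1,\dots,b_n$ with $\mathcal{Y}=\bigcup_{i=1}^n V_{b_i}$. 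Setting $U:=\bigcap_{i=1}^n U_{b_i}$, a neighborhood of $a_0$, then for any $a\in U$ and any $b\in\mathcal{Y}$, choosing $i$ with $b\in V_{b_i}$ and noting $a\in U_{b_i}$ yields $\kappa(a,b)>\gamma(a_0)-\epsilon$; taking the infimum over $b$ gives $\gamma(a)\ge\gamma(a_0)-\epsilon$ on $U$.

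Combining the two semicontinuity properties at every $a_0$ yields continuity of $\gamma$. The only real obstacle is the lower-semicontinuity step: one must resist fixing a single minimizer (which only controls $\kappa$ near one point of $\mathcal{Y}$) and instead cover $\mathcal{Y}$ and invoke compactness to make the neighborhood of $a_0$ uniform in the second variable; everything else is routine bookkeeping with the product topology.
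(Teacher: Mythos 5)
Your proof is correct. Note that the paper itself omits the proof of this lemma (``proof omitted for brevity''), so there is no argument of the authors' to compare against; your decomposition into upper semicontinuity (via a near-minimizer $b_0$, not needing compactness) and lower semicontinuity (via the tube-lemma-style finite subcover of $\mathcal{Y}$, where compactness is essential) is the standard and complete way to establish the result, and all the steps check out.
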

%Lemma~12 is a standard result in topology and the proof is hence omitted for brevity.
\begin{proof}[Proof of Lemma~8]
%To be added.
Denote by $\mathcal{P}_i:\, {\mathbb R}^{m(N+2M)} \rightarrow {\mathbb R}^{m}$ the
projection map $\mathcal{P}_i(z)
= z_i$, $i=1,...,N+2M$. Further, denote by $\mathcal{P}_i(\Gamma(\epsilon)):= $$\left\{ z_i \in {\mathbb R}^{m}:\,\,z_i = \mathcal{P}_i(z),
 \,\,\mathrm{for\,\,some\,\,}z \in \Gamma(\epsilon) \right\}$.
 The set $\mathcal{P}_i(\Gamma(\epsilon))$ is compact, for all $i=1,...,N+2M$, because
  the set $\Gamma(\epsilon)$ is compact. Consider now the set ${\mathbb R}^{m(N+2M)} \supset C_{\epsilon}
 := \mathcal{P}_1(\Gamma(\epsilon)) \times \mathcal{P}_2(\Gamma(\epsilon)) \times\,...\,
 \times \mathcal{P}_{N+2M}(\Gamma(\epsilon))$,
 where the symbol $\times$ denotes the Cartesian product of the sets. Clearly, $C_{\epsilon} \supset {\Gamma}_{\epsilon}(B)$. We will show that $L^i$
is continuous on $C_{\epsilon}$, i.e., that $L^i: C_{\epsilon} \rightarrow {\mathbb R}$
 is continuous, which will imply the claim of Lemma~8. Recall the definition of $L^i$ in eqn.~\eqref{eqn_L^i(z)}.
 It is easy to see that the minimum in eqn.~\eqref{eqn_L^i(z)} is attained on the
  set $\mathcal{P}_i \left( \Gamma(\epsilon) \right)$, i.e., that $L^i(z) = \min_{w_i \in \mathcal{P}_i \left( \Gamma(\epsilon) \right)} L \left( z_1,z_2,...,z_{i-1},w_i,z_{i+1},...,z_{N+2M} \right).$
  Thus, by Lemma~12, and because the function $L: {\mathbb R}^{m(N+2M)}\rightarrow \mathbb R$ is
  continuous, the function $L^i: \,\, \mathcal{P}_1(\Gamma(\epsilon)) \times
  \,...\, \times \mathcal{P}_{i-1}(\Gamma(\epsilon)) \times \mathcal{P}_{i+1}(\Gamma(\epsilon))
  \times...\times \mathcal{P}_{N+2M}(\Gamma(\epsilon)) \rightarrow {\mathbb R}$ is continuous. But this means that
   $L^i: C_{\epsilon} \rightarrow {\mathbb R}$ is also continuous.
\end{proof}

\vspace{-1.mm}

\textbf{Convergence proof of the P--AL--MG algorithm.}
We first introduce an abstract model of the P--AL--MG algorithm.
First, we impose an additional assumptions that the link failures
 are spatially independent, i.e., the Bernoulli states $A_{ij}(k)$
 and $A_{lm}(k)$ of different links at time slot $k$ are independent. Define the sets
$Y(\Omega_i):=\left\{ y_{ji}:\,\,j \in \Omega_i \right\}$ and the class
$Y(O_i):=\left\{ y_{ji}:\,\,j \in O_i \right\}$, where $O_i \subset \Omega_i$.
  One distinct set $Y(O_i)$ is assigned to each distinct subset $O_i$ of $\Omega_i$. (Clearly, $Y(\Omega_i)$ belongs to a class of sets $Y(O_i)$, as $\Omega_i$
   is a subset of itself.) With P--AL--MG, at iteration $k$, minimization is performed either with respect to $x_i$, $i \in \{1,...,N\}$, or with respect to some $Y(O_i)$. If none of the neighbors of node $i$ receives successfully a message, then iteration $k$ is void. Define the following collection of the subsets of primal variables:
 $\Pi:=\{\{x_1\},...,\{x_N\},Y(\Omega_1),...,Y(\Omega_N)\}$. Collection $\Pi$ constitutes a partition of the set of
  primal variables; that is, different subsets in $\Pi$ are disjoint and their union contains all primal variables. Further,
   denote each of the subsets
   $\{x_i\}$, $Y(O_i)$, $Y(\Omega_i)$, with appropriately indexed $Z_s$, $s=1,...,S$. Then, with P--AL--MG, at time slot $k$, $L(z)$ is optimized with respect to one $Z_s$, $s=1,...,S$. Define $\xi(k)$,
    $k=1,2,...$, as follows: $\xi(k)=s$, if, at time slot $k$, $L(z)$ is optimized with
    respect to $Z_s$; $\xi(k)=0$, if, at $k$, no variable gets updated--when all transmissions at time slot $ k$ are unsuccessful. Denote by $P(Z_s) =
    \mathrm{Prob} \left(\xi(k)=s \right)$. Under
     spatial independence of link failures, $P(Z_s)$ can be
     shown to be strictly positive for all $s$. It can be shown that $\xi(k)$ are i.i.d. Consider now~\eqref{eqn_opt_problem_L(z)} and P--AL--MG. All results for P--AL--G remain valid for P--AL--MG also--only the expressions for the expected decrease of $L(\cdot)$ per iteration, $\psi(z)$, (Lemma~7), and the proof of Lemma~8 change. Denote by $L^{(Z_s)}(z)$ the optimal value after minimizing $L(\cdot)$ with respect to $Z_s$ at point $z$ (with the other blocks $z_j$, $z_j \notin Z_s$, fixed.)~Then:
$
\psi(z) = \sum_{s=1}^{S} P(Z_s) \left( L^{(Z_s)} - L(z)  \right).
%\label{eqn_psi(z)_V2}
$
Recall $\phi(z)=-\psi(z)$ and the set $\Gamma(\epsilon)$, for some $\epsilon>0$.
Lemma~8 remains valid for P--AL--MG. To see this, first remark that $\phi(z) \geq 0$, for all $z \in F$. We want to show that $\phi(z)>0$, for all $z \in \Gamma(\epsilon)$. Suppose not. Then, $L(z)=L^{(Z_s)}(z)$, for all $Z_s$, $s=1,...,S$. Then, in particular, $L(z)=L^{(Z_s)}(z)$, for all $Z_s$ in the partition $\Pi$. Because $P(Z_s)>0$, $\forall s$, this implies that the point $z$ is block-optimal (where now, in view of Definition~\ref{definition-block-optimal}, $Z_s$ is considered a single block). By
Assumption~3,
$z$ is also optimal, which contradicts $z \in \Gamma(\epsilon)$. Thus, $\phi(z)>0$ for all $z \in \Gamma(\epsilon)$. The proof now proceeds as with the proof of Lemma~8 for algorithm P--AL--G.
%\end{proof}
%
%
\vspace{-1.mm}

\textbf{Convergence proof of the P--AL--BG algorithm.}
P--AL--BG is completely equivalent to P--AL--G, from the optimization
 point of view. P--AL--BG can be modeled in the same way as in Alg.~2, with a difference that, with P--AL--BG: 1) there is a smaller number ($=N$) of primal variables: $z_i:=x_i$, $i=1,...,N$;
   and 2) $\mathrm{Prob}(\zeta(k)=0)=0$. Thus, the analysis in section~V is also valid for~P--AL--BG.
\vspace{-3mm}
\bibliographystyle{IEEEtran}
\bibliography{IEEEabrv,bibliography_distributed_optimization}

\end{document}